\documentclass{article}
\usepackage{amsmath, amsfonts,amssymb, graphicx, setspace, subcaption}
\usepackage{enumitem}
\usepackage{authblk}
\usepackage[
backend=biber,
style=alphabetic,
sorting=nyt
]{biblatex} 
\usepackage[export]{adjustbox}
\providecommand{\keywords}[1]
{
  \small	
  \textbf{\textit{Keywords---}} #1
}

\newtheorem{theorem}{Theorem}
\newtheorem{lemma}{Lemma}

\newcommand{\qed}{\hfill\blacksquare}
\setlist[itemize]{after={\bigskip}}

\begin{document}
\title{The information-theoretic complexity of differentiable functions}
\author[]{Matthijs Ruijgrok}
\affil[]{Mathematics Department, Utrecht University}
\affil[]{\textit {m.ruijgrok@uu.nl}}
\date{\today}

\maketitle
\begin{abstract}
    \noindent A measure for the complexity of a differentiable function on an interval is introduced. It is based on approximations of the function by piecewise constant functions. The measure takes into account the quality of the approximation and the number of intervals in the approximating function. \\
    This measure, called the V-complexity of $f$, is shown to formalize some intuitions about the simplicity or complexity of $f$.\\
    The V-complexity is then compared to another measure of complexity, namely how compressible an approximation of $f$ is. It is hypothesized that V-complexity is equivalent to the compression measure, in the case of the Run Length Encoding and the Lempel Ziv 77 algorithms.\\
    V-complexity can be used as an ingredient in the definition of the Effective Complexity (EC) of a Complex System. When the perceived regularities of such a system are described by a differentiable function on an interval, the EC can be defined as the V-complexity of that function. EC is applied to the model of diffusion of cream in a cup of coffee. The perceived regularity of this model is given by the diffusion equation. The V-complexity of the solution of the equation starts at zero, quickly increases to a maximum and then decreases back to zero as the liquid reaches its equilibrium state. It is shown that this is also the result when a cellular automaton approach and the concept of Apparent Complexity is used.     
\end{abstract}
\keywords{differentiable function, information-theoretic complexity, compression, Complex Systems,  cream-and-coffee model}
\section{Introduction}

The books, papers and internet tutorials on complexity agree on one thing: it is a concept that is hard to pin down, see \cite{Ladyman} and \cite{Newman}. One reason is that complexity is used to describe systems of various intricacy, from the global climate down to one-dimensional cellular automata. This paper stays at the lower end of the scale and proposes a measure for the complexity of differentiable functions $f: [a,b] \rightarrow \mathbb{R}$. The notion of complexity that will be used is the amount of information needed to describe an object \cite{Prokopenko}. \\
Figure 1 gives a sample of functions with a variety of features. Most likely, the reader will assign the highest complexity to functions with a lot of detail. Formalizing this intuition is the goal of this paper.
\begin{figure}[h!]
\centering
\begin{subfigure}{0.3\textwidth}
\centering
\includegraphics[scale=0.2]{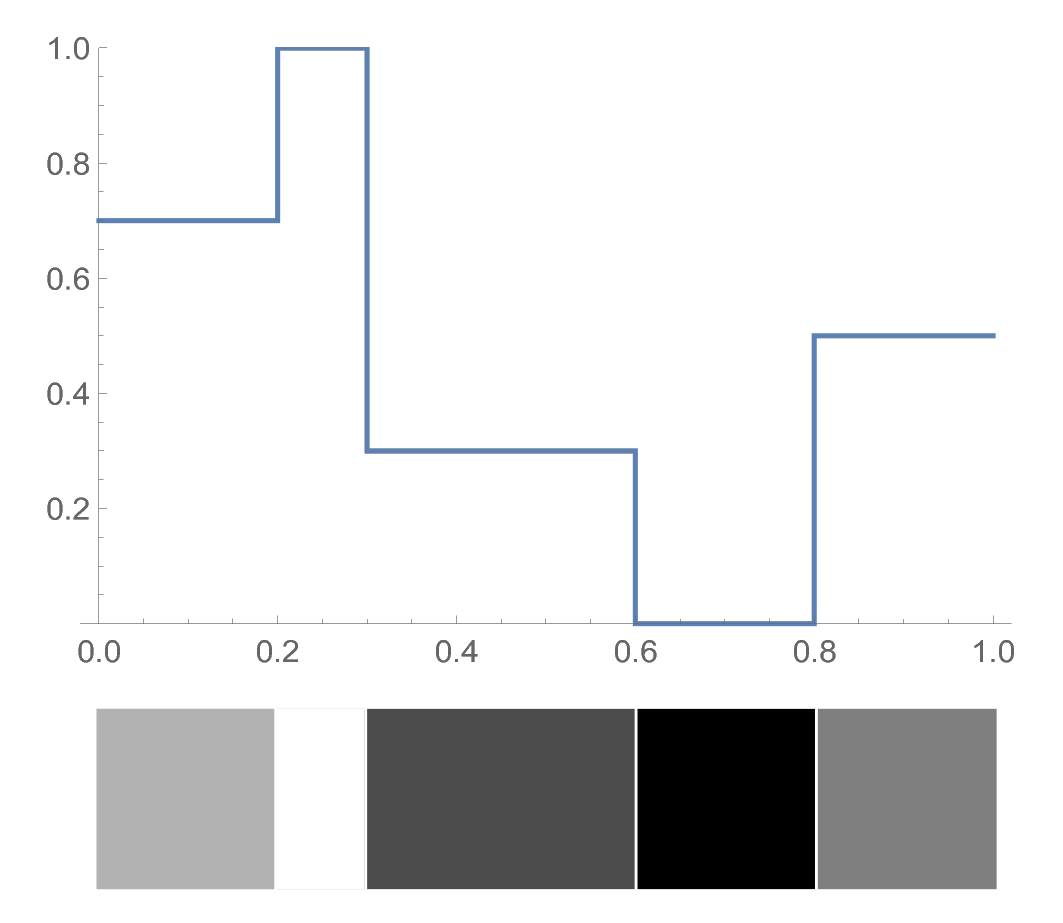}
\caption{$V(f)=0$}
\label{d0r45}
\end{subfigure}
\begin{subfigure}{0.3\textwidth}
\centering
\includegraphics[scale=0.2]{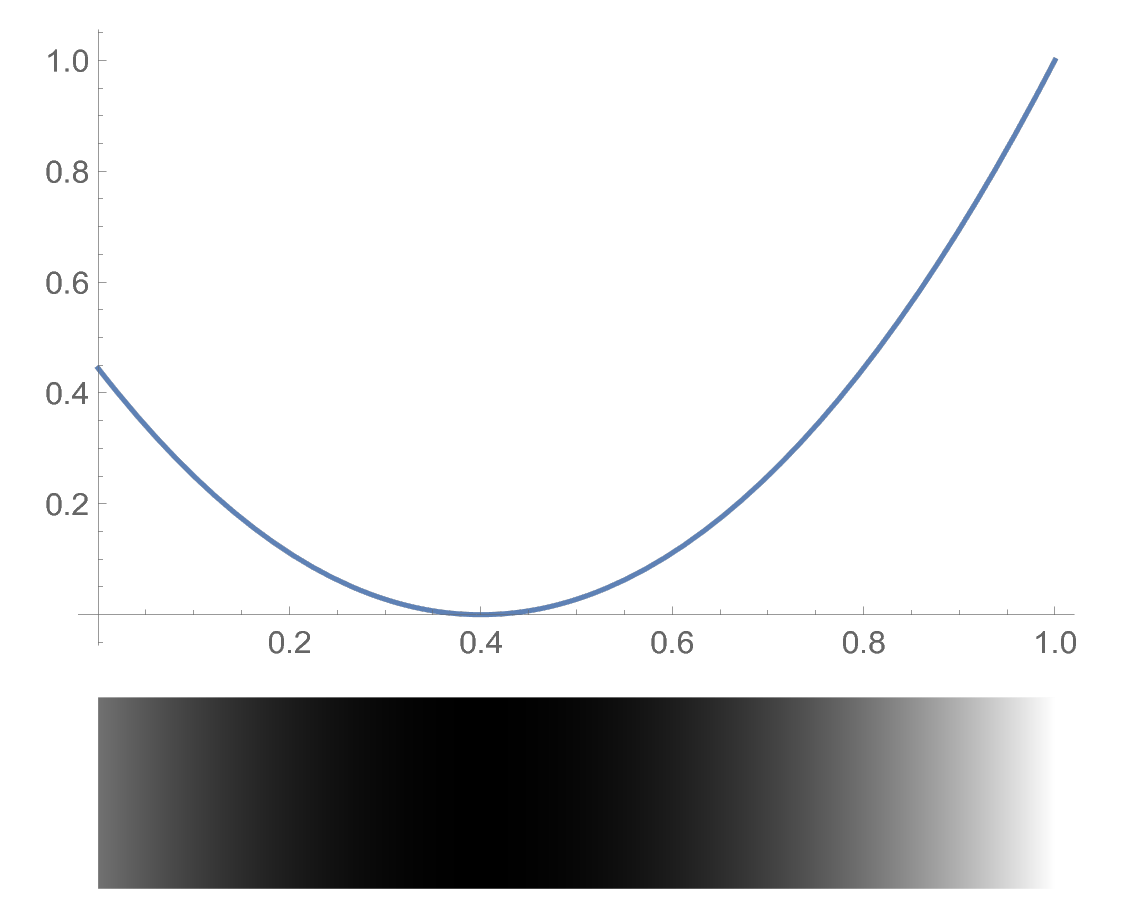}
\caption{$V(f)=1.27$}
\label{d1r45}
\end{subfigure}
\begin{subfigure}{0.3\textwidth}
\centering
\includegraphics[scale=0.2]{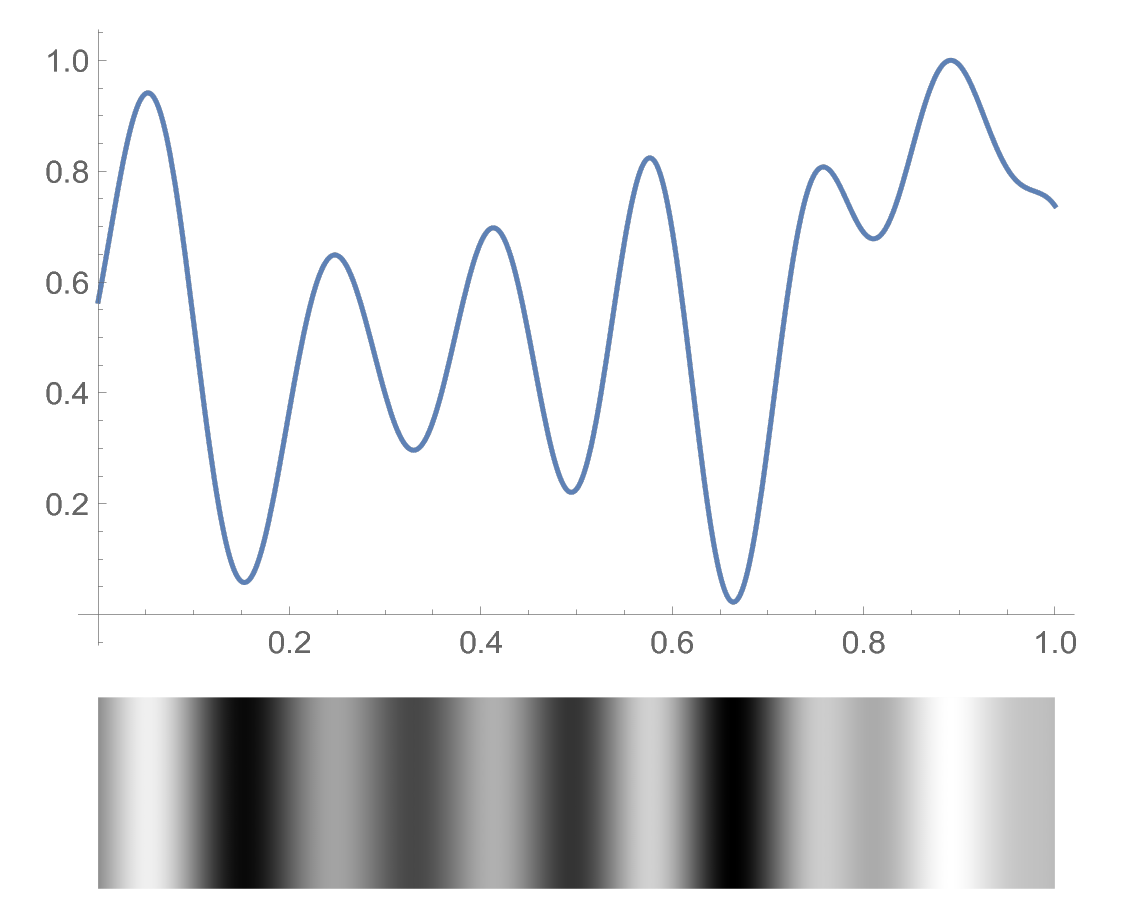}
\caption{$V(f)=5.24$}
\label{d2r45}
\end{subfigure}
\caption{Some functions with domain and range equal to $[0,1]$, and their complexities according to expression (\ref{Sexp}). Below the graphs are the corresponding grayscale figures.}
\label{rule45}
\end{figure}
\newline
As the amount of information contained in a function $f: [a,b] \rightarrow \mathbb{R}$ is generally infinitely large, the complexity of $f$ will be based on finite approximations of the function. In a nutshell, the easier it is to describe a good approximation of a function, the lower its complexity.\\
To make this idea more concrete, some choices have to be made. First of all, a good approximation of a function $f$ is another function $g$ such that the norm of the difference $||f-g||$ is small. There is a variety of norms on function spaces, but the most commonly used ones are the $L_p$ norms.  The choice in this paper is to use the $L_1$ norm. This choice will be motivated in section \ref{sectappr}.\\
Second, the ease of description of an approximation of a function $f$ needs to be addressed. There exist sets of functions such that every differentiable function can be written as an infinite series expansion in terms of elements from this set.  The approximation of $f$ will consist of a finite number of terms of the expansion. Given such a set, $f$ is said to be easily described if a small number of terms of the expansion of $f$ already give a good approximation. One example is the Fourier set of sine and cosine functions. Every differentiable function on an interval can be written as an infinite expansion in terms of these functions.\\
The choice of the set of function that are to be used in the expansion is therefore essential. In this paper, the step functions will be used. These are functions that are piecewise constant. It is known that all continuous functions on $[a,b]$, which includes the differentiable functions, can be approximated in the $L_1$ norm by a convergent series of step functions \cite{Royden}.\\
The motivation for choosing the step functions is that it connects to properties of our visual perception. Humans, and many other species, are very good at distinguishing contrasts and outlines. To the human eye, a graph such as in Figure 1a is simple. It can be described by a few real numbers which identify the intervals where the function is constant, and the values of the function on these intervals. Functions which we perceive as less simple, such as Figure 1c, have many wiggles and details on a small scale. A good approximation in the form of a step function needs many intervals.\\
The choice would be different if it was based on human audio perception. There, the Fourier basis would be the appropriate choice, since the ear works as a natural Fourier transformer for many types of signals \cite{Lewicki}. To the ear, Figure 1c, which consists of fifteen Fourier modes, is the relatively easy function, whereas a step function signal needs many more modes to describe adequately. In the Fourier world, a step function is highly complex.\\\\
In section \ref{sectappr}, the main result of this paper is formulated. Given a differentiable function $f:[a,b]\rightarrow {\mathbb{R}}$ and a maximal error $\varepsilon>0$, there exists a step function $q$ such that $||f-q||_1<\varepsilon$ and $q$ uses a minimal number of intervals $N_{\varepsilon}(f)$.  As $\varepsilon \rightarrow 0$
\begin{align}
    \label{firstS}
    N_{\varepsilon}(f)=V(f)\,\varepsilon^{-1} \, .
\end{align}
In other words, the number of intervals needed for a good approximation of $f$ is of order ${\cal{O}}(\varepsilon^{-1})$ as $\varepsilon \rightarrow 0$, for all differentiable functions $f$. The pre-factor $V(f)$ is the definition of the complexity of $f$ and it will be referred to as the V-complexity of $f$.\\
The accuracy of the approximation increases as $\varepsilon$ decreases. Therefore, the quantity $\varepsilon^{-1}$ can be seen as a measure of the accuracy of the approximation. Rewriting (\ref{NQf}) as
\begin{align}
\label{QNe}
    V(f)=N(\varepsilon)/\varepsilon^{-1}=N(\varepsilon)\varepsilon 
\end{align}
shows that the V-complexity of $f$ is the ratio of the minimal amount of information needed to describe $f$ to the accuracy of the description, as the accuracy becomes perfect. Alternatively, V-complexity is the product of this minimal amount of information with the error of the approximation, as the error goes to zero.\\
The proof of the result consists of constructing an approximating step function $q$ such that the error on every interval is the same and the total error is $\varepsilon$. This is known as the equidistribution of error approximation and there is good evidence that this approximation uses the minimal number of intervals among all step function approximations. The result is that
\begin{align}
\label{Sexp}
    V(f)=\frac{1}{4}\left( \int_a^b |f'(x)|^{\frac{1}{2}} \, \textrm{d}x\right)^2\,.
\end{align}
In section \ref{exmp}, a number of examples will show that V-complexity is small for functions which themselves are close to a step function. For functions with many wiggles, such as in Figure \ref{d2r45}, the V-complexity is relatively large. \\
\\
Before continuing with the main application of V-complexity, an alternative complexity measure for differentiable functions, named compression-complexity, is introduced in section \ref{Qandcomp}. The idea is to discretize the function $f$ by constructing a regular grid on the domain and on the range, with grid widths $\Delta x$ and $\Delta y$, respectively. The function is sampled on the grid  points $x_k=a+k\Delta x$ and the outcomes rounded to the nearest multiple of $\Delta y.$. This procedure transforms $f$ to a finite string of symbols from a finite alphabet. Then, a lossless compression algorithm ${\cal{A}}$ is applied to this string. Analogous to definition (\ref{QNe}), the ${\cal{A}}$-complexity of $f$ is defined as length of the compressed string multiplied by the error of the approximation, as the error goes to zero.\\
The compression algorithms considered are the simple Run Length Encoding and the more sophisticated Lempel-Ziv 77 encoding, used in the GZIP tool and many others. The result of this section is the hypothesis that V-complexity is close to RLE-complexity and that RLE and LZ77 complexity are equivalent, as $\Delta x$ goes to zero.\\ \\
V-complexity is an information-theoretic concept, applicable to functions, which are static and deterministic objects. However, there is a link with Complex Systems, understood as dynamical systems consisting of many interacting elements and with possibly a stochastic component. This link is investigated in Section \ref{CCex}. The connection is through Effective Complexity, which is defined as "the length of a concise description of the system's perceived regularities" \cite{GellMann}. It is based on the idea that a Complex System can be decomposed into a regular part and a residual part which is irregular or probabilistic.\\
This definition, however, assumes that the regular part of the system is given by a finite set of numbers. There are situations where the regular part is best described by a function $f$ on an interval of real numbers. In those cases, even the most concise description of the perceived regularities has infinite length. It is here that V-complexity is useful, by changing the definition of Effective Complexity slightly to "the V-complexity of the function $f$ which describes a system's perceived regularities".\\
 This modification of Effective Complexity is applied to the paradigmatic example of diffusion of cream in a cup of coffee. In the model, a cup of cappuccino starts out in an initial state where the upper half consists of only cream and the lower half only of coffee. The particles start to mix and this diffusion process eventually leads to a uniformly mixed liquid.  The model was introduced in \cite{Carroll} to illustrate the difference between the concepts of entropy and of complexity. It will be studied first through computer simulations of a cellular automaton and compression complexity, an approach called Apparent Complexity. Second, the model is studied analytically, using the diffusion equation and the modified notion of Effective Complexity.\\
The outcome is that both approaches lead to qualitatively similar results, namely that complexity first increases, reaches a maximum and then decreases to zero, as the equilibrium state is approached.\\
Finally, section \ref{Concl} concludes.

\section{Approximation by step functions}
\label{sectappr}
To measure the nearness of an approximation, a norm has to be chosen from the family of $L_p$ norms. For $p \geq 1$ and a function $f:[a,b]\rightarrow {\mathbb{R}}$, this norm is defined as:
\begin{align*}
    ||f||_p=\left(\int_a^b |f(x)|^{p}\, \textrm{d}x\right)^{1/p} \, ,
\end{align*} 
These norms are not equivalent. On the one hand, Hölder's inequality states that for all $p\geq 1$ 
\begin{align*}
    ||f||_1 \leq C_p ||f||_p \, , \quad C_p=(b-a)^{1-1/p} \, .
\end{align*}
This inequality implies that if a function $g_{\varepsilon}$ is an ${\cal{O}}(\varepsilon)$ approximation of a function $f$ in an $L_p$ norm for any $p >1$, it is also an ${\cal{O}}(\varepsilon)$ approximation in the $L_1$ norm.\\
However, the converse is not true. As an example, let $f$ be the step function that is equal to 0 if $x<0$ and equal to 1 for $x\geq 0$. As a grayscale picture, $f$ is a Malevich style painting, with the left half completely black and the right half completely white. The boundary between black and white is at $x=0$. Take $g_{\varepsilon}$ to be the same picture, but with the boundary at $x=\varepsilon$. Then, $||f-g_{\varepsilon}||_p={\varepsilon}^{1/p}$. Therefore, the function $g_{\varepsilon}$  is an ${\cal{O}}(\varepsilon)$ approximation of $f$ only for $p=1$. For $p>1$ the distance between $f$ and $g_{\varepsilon}$ is asymptotically larger.\\
Continuing the human vision analogy, it is desirable that in the above example the difference between $f$ and $g_{\varepsilon}$ should be small. The contrast between the black and the white halves of the picture stands out, not the precise location of the boundary. Therefore, the choice of norms should make the difference minimal, which corresponds to $p=1$.\\\\  
For a given $\varepsilon>0$, a function $f(x)$ with domain $[a,b]$ will be approximated by a step function $q_{\varepsilon}(x)$ in such a way that $||f-q_{\varepsilon}||_1\leq \varepsilon$. This approximation should be achieved by using the smallest number of jumps in $q_{\varepsilon}(x)$. Finding such a $q_{\varepsilon}(x)$ is a very difficult optimization problem. Instead, $q_{\varepsilon}(x)$ will be restricted so that the errors are equidistributed, meaning that all the errors on each subinterval where $q_{\varepsilon}(x)$ is constant are the same. In \cite{Huang} it is shown that equidistribution of errors leads to optimal approximations as $\varepsilon \rightarrow 0$, under certain conditions. A full analysis on whether these conditions are exactly met in this case would lead too far, but the least that can be said is that equidistribution approximations perform very well in practice \cite{DamZegeling}.\\
Let the total number of steps in $q_{\varepsilon}(x)$ be $N(\varepsilon)$. Then $q_{\varepsilon}(x)$ is defined by the set $\{\{I_1,q_1\}, \ldots,\{I_N,q_{N(\varepsilon)}\}\}$. Here, the $I_k$ are disjoint intervals and $q_{\varepsilon}(x)=q_k$ for all $x \in I_k$. Equidistribution of errors means that
\begin{align}||f-q_k||_{I_k}&= \varepsilon / N(\varepsilon) \, , \quad k=1, 2, \ldots, N(\varepsilon) \label{equi1}\\
 \bigcup_{k=1}^{N(\varepsilon)} I_k&=[a,b] \label{equi2}\, ,
\end{align}
where $||f-q_k||_{I_k}$ is shorthand for the $L_1$ norm of the difference $f-q_k$, restricted to the interval $I_k$.\\
The aim of the following algorithm is to minimize the number $N(\varepsilon)$ of intervals needed for the approximation, assuming that the total error is equidistributed. This is achieved by taking $I_1=[a,x_1]$ and calculating the largest value of $x_1$ such that for an optimal $q_1$, condition (\ref{equi1}) is satisfied. Taking $I_2=[x_1,x_2]$ this procedure is repeated to calculate $x_2$. The algorithm ends when an index $k$ is reached such that $x_k\geq b$. This index is the desired $N(\varepsilon)$.\\ 
An asymptotic expansion for $N(\varepsilon)$, as $\varepsilon \rightarrow 0$ is found by transforming the recursion used to calculate $x_{k+1}$ from $x_k$ into a differential equation. This results in:
\begin{theorem} 
\label{thm1}
Let $f \in C^1([a,b])$. Then, asymptotically as $\varepsilon \rightarrow 0$,
\begin{itemize} 
    \item [(a)] 
    \begin{align}
    \label{NQf}
        N(\varepsilon)=V(f)\,\varepsilon^{-1}\, ,
    \end{align}
    with
    \begin{align}
    \label{Sp}
        V(f)=\frac{1}{4}\left(\int_a^b|f'(x)|^\frac{1}{2}\textrm{d}x\right)^{2}\,.
    \end{align}
    \item [(b)] Let $y(s)$ be the solution of the differential equation 
    \begin{align}
    \label{diffy}
       |f'(y)|^{\frac{1}{2}} \frac{\textrm{d}y}{\textrm{d}s}=\left( 4 V(f)\right)^{\frac{1}{2}} \, , \quad y(0)=a.
    \end{align} 
Then, asymptotically as $\varepsilon \rightarrow 0$,
\begin{align}
\label{xk}
    x_k=y(\frac{\varepsilon k}{V(f)}) \, , \quad k=0,1,\ldots, N(\varepsilon)\, .
\end{align}
and
\begin{align}
    q_k=f((x_{k-1}+x_k)/2) \, , \quad k=1,2,\ldots, N(\varepsilon)\, .
\end{align}
\end{itemize}
\end{theorem}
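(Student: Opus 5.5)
Here is the plan: do a local analysis on each small interval, then pass from the discrete recursion to a continuum limit. Fix a subinterval $I=[x,x+h]$ with $h$ small. The best constant $q$ minimizing $\int_I |f-q|$ is a median of $f$ on $I$. Expand $f(t)=f(x+h/2)+f'(x+h/2)(t-x-h/2)+o(h)$, which uses only $f\in C^1$ and the uniform continuity of $f'$. To leading order $f$ is linear on $I$, so its median is the midpoint value; this gives $q_k=f((x_{k-1}+x_k)/2)$. The minimal error is then
\begin{align*}
E(x,h)=\int_{-h/2}^{h/2}|f'(x)|\,|s|\,\textrm{d}s+o(h^2)=\tfrac14|f'(x)|h^2+o(h^2),
\end{align*}
with the $o(h^2)$ uniform in $x$ because $f'$ is uniformly continuous.

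The recursion comes next. Set $\delta=\varepsilon/N(\varepsilon)$. The algorithm chooses $h_k=x_k-x_{k-1}$ as the largest $h$ with $E(x_{k-1},h)=\delta$, so to leading order $\tfrac14|f'(x_{k-1})|h_k^2=\delta$. This gives $h_k\approx 2\delta^{1/2}|f'(x_{k-1})|^{-1/2}$, which can be rewritten as
\begin{align*}
|f'(x_{k-1})|^{1/2}(x_k-x_{k-1})\approx 2\delta^{1/2}.
\end{align*}
Summing over $k$ and recognizing a Riemann sum gives
\begin{align*}
\int_a^b|f'|^{1/2}\,\textrm{d}x\approx 2N\delta^{1/2}=2(N\varepsilon)^{1/2}.
\end{align*}
Solving for $N$ gives $N\varepsilon\to\tfrac14(\int_a^b|f'|^{1/2})^2=V(f)$, which is part (a). For part (b), introduce the continuous index $s=\varepsilon k/V(f)$. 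Then $\Delta s=\varepsilon/V=1/N$ to leading order, and the relation above becomes
\begin{align*}
|f'(y)|^{1/2}\frac{\textrm{d}y}{\textrm{d}s}=2N^{1/2}\delta^{1/2}\cdot N^{1/2}\cdot N^{-1/2}\cdots
\end{align*}
More precisely, the ratio $\Delta x/\Delta s=2\delta^{1/2}N|f'|^{-1/2}=2(N\varepsilon)^{1/2}|f'|^{-1/2}\to 2V^{1/2}|f'|^{-1/2}$, which is (\ref{diffy}). Integrating (\ref{diffy}) from $s=0$ to $s=1$ recovers $y(1)=b$, consistent with $k=N$. The convergence $x_k\approx y(\varepsilon k/V)$ then follows from a standard Euler-scheme argument: the recursion is a consistent one-step discretization of the ODE, so it converges.

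The main obstacle is the set of points where $f'$ vanishes or is small. There the local expansion is not dominated by the linear term, $h_k$ is no longer small, and the uniform $o(h^2)$ bound does not control the relative error. I would handle this by splitting $[a,b]$ into the set where $|f'|\ge\eta$ and a neighbourhood of its complement. On the first set the analysis above holds with relative errors $o(1)$. On the second set I bound the number of intervals directly, using the fact that an interval of length $h$ with oscillation $\omega$ has error at most $h\omega$; this should give a contribution that is $o(\varepsilon^{-1})$ plus $O(\eta^{1/2})\varepsilon^{-1}$. Letting $\varepsilon\to0$ and then $\eta\to0$ yields (a). The ODE in (b) is to be understood in the same sense, since it degenerates where $f'=0$ and $y$ is there defined through the inverse map $s(y)=\int_a^y|f'|^{1/2}/(4V)^{1/2}$. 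I would also need to check that the last, partial interval and the discrepancy between $\delta$ and its limiting value contribute only $O(1)$ intervals.
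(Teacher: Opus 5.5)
Your local analysis matches the paper's: linearize $f$ on $I_k$, take the midpoint value, get error $\tfrac14|f'|h_k^2$, and use equidistribution to obtain $h_k\approx 2\delta^{1/2}|f'(x_{k-1})|^{-1/2}$. The difference is that you close the argument in the opposite order.

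The paper goes to the continuum first. It sets $y(s)=x(\lfloor N s\rfloor)$ and \emph{postulates} $\varepsilon/\Delta s=\mathcal{O}(1)$, on the grounds that otherwise the limit is not a nontrivial ODE. It then defines $V$ through $N=V\varepsilon^{-1}$, obtains (\ref{diffy}), and only afterwards fixes $V$ from the closure condition $y(1)=b$. So in the paper, (a) is read off from (b).

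You instead sum $|f'(x_{k-1})|^{1/2}(x_k-x_{k-1})\approx 2\delta^{1/2}$ over $k$ and recognize a Riemann sum, which gives $2(N\varepsilon)^{1/2}\to\int_a^b|f'|^{1/2}$ directly. The $\varepsilon^{-1}$ scaling is derived rather than assumed, and (b) becomes a corollary. Your version is also more careful at two points where the paper is silent:
\begin{itemize}
\item[(i)] you make the $o(h^2)$ remainder uniform through the modulus of continuity of $f'$;
\item[(ii)] you handle zeros of $f'$ by the $\eta$-splitting. The paper's recursion divides by $|f'(x_{k-1})|$, and its preliminary reduction only removes intervals on which $f$ is constant, so even the zero of $f'$ in its own example $f(x)=x^2$ is not covered.
\end{itemize}
What the paper's ordering buys is brevity and a direct identification of the grid with the monitor-function ODE.

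Two small fixes are needed.
\begin{itemize}
\item[(1)] A ``standard Euler-scheme argument'' needs a Lipschitz right-hand side, and $2V^{1/2}|f'(y)|^{-1/2}$ is neither Lipschitz nor bounded near zeros of $f'$. You do not need it. The partial sums of your own Riemann-sum step give $\int_a^{x_k}|f'|^{1/2}\,\textrm{d}x=2k\delta^{1/2}+o(1)+O(\eta^{1/2})$ uniformly in $k$. This means $s(x_k)-\varepsilon k/V(f)\to 0$ for your inverse map $s$, and (\ref{xk}) then follows from uniform continuity of $y=s^{-1}$. This uses the paper's standing assumption that $f$ is constant on no subinterval, without which (\ref{diffy}) itself degenerates.
\item[(2)] Do the counting as a function of $\delta$ first, proving $2N\delta^{1/2}\to\int_a^b|f'|^{1/2}$, and substitute $\delta=\varepsilon/N$ only at the end. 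Stating the bad-set contribution as $O(\eta^{1/2})\varepsilon^{-1}$ tacitly uses $\delta^{-1/2}\asymp\varepsilon^{-1}$, which is what you are proving.
\end{itemize}
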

The proof of this theorem can be found in Appendix A.\\ \\
Part ($a$) of the theorem states, given a maximal approximation error $\varepsilon$, the number of intervals needed for an approximation scales as $\varepsilon^{-1}$. The pre-factor $V(f)$ is taken as the definition of the complexity of $f$.\\
Part ($b$) of the theorem gives a method for constructing the approximating step function. In general, the lengths of the intervals $I_k$ are not equal. The domain $[a,b]$ is partitioned into a non-uniform grid. In the language of Numerical Analysis, equation (\ref{diffy}) shows that this partition uses a version of the Beckett-Mackenzie monitor function \cite{Qiu}. From (\ref{diffy}) it follows that the intervals are shorter where $|f'(x)|$ is large and vice-versa.

\section{Examples}
\label{exmp}
\begin{figure}[h]
\centering
\includegraphics[width=0.6\textwidth]{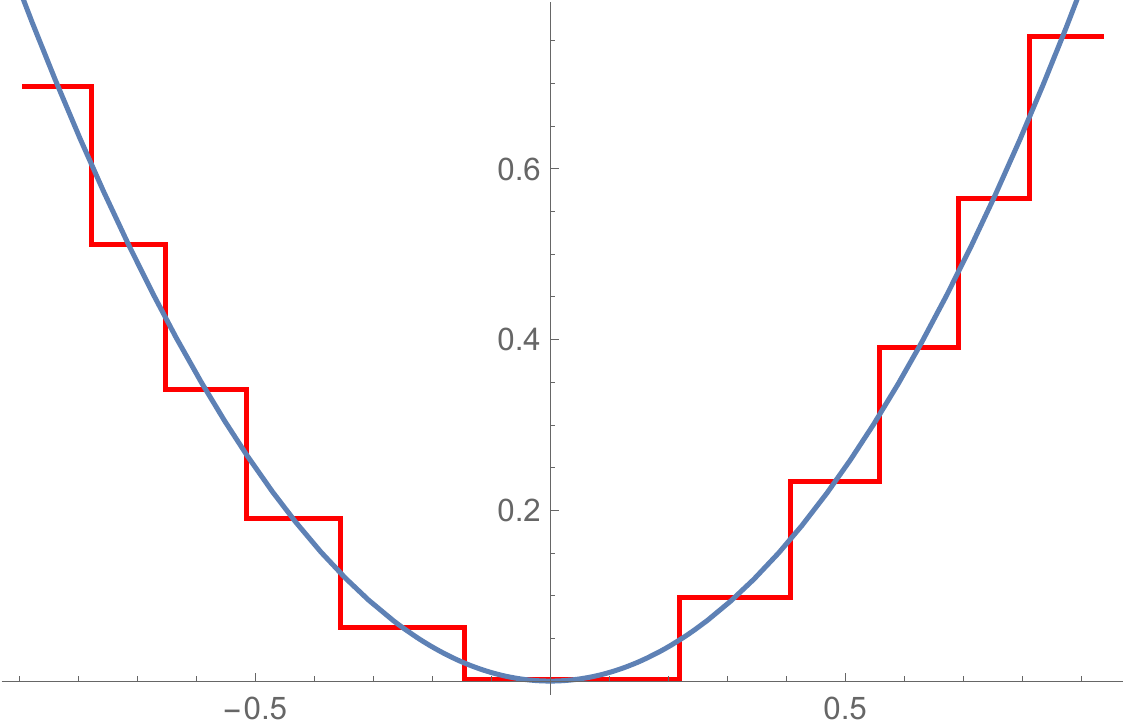}
\caption{The step function approximation for $f(x)=x^2$, with $\varepsilon=0.07$.}
\label{parabola}
\end{figure}
\noindent 
In the following list, $V(f)$ is calculated for some (families of) functions which demonstrate essential properties of this complexity measure. 
\begin{itemize} 
\item[(a)] {\bf{Quadratic function}} $f(x)=x^2$ on $[-1,1]$.  From (\ref{Sp}) it follows that
\begin{align*}
    V(f)=\frac{1}{4}\left( \int_{-1}^{1}|2x|^{\frac{1}{2}} \textrm{d}x \right)^2=\frac{8}{9} \, .
\end{align*}
The differential equation (\ref{diff}) becomes:
\begin{align*}
    \frac{\textrm{d}y}{\textrm{d}s}=\frac{4}{3} |y|^{-\frac{1}{2}}\, , \quad y(0)=-1 \, ,
    \end{align*} 
    with solution
    \begin{align*}
        y(s)=\begin{cases} -(1-2s)^{2/3} & \text{if } 0 \leq s \leq 1/2 \\
                      (2s-1)^{2/3} & \text{if } 1/2 <s \leq 1 \, .
        \end{cases}
    \end{align*}
Hence, asymptotically as $\varepsilon \rightarrow 0$,
\begin{align*}
x_k=\begin{cases} -(1-9 \varepsilon k/4)^{2/3} & \text{if } 0 \leq k \leq 4/(9 \varepsilon) \\
                      (9 \varepsilon k/4-1)^{2/3} & \text{if } 4/(9 \varepsilon) <k \leq 8/(9 \varepsilon) \, .
        \end{cases}
\end{align*}

\item [(b)] {\bf{Step functions}}. Let $[a,b]=[-1,1]$ and  $\alpha>0$. Define 
\begin{align*}
    f_{\alpha}(x)=\frac{1}{1+e^{-\alpha x}} \, .
\end{align*}
This is a sigmoid function which converges in the $L_1$ norm to the step function $\theta(x)$ as $\alpha \rightarrow \infty$. Here, $\theta(x)$ is defined as $\theta(x)=0$ when $x<0$, $\theta(0)=1/2$ and $\theta(x)=1$ when $x>0$. This follows from 
\begin{align*}
    \int_{-1}^1|f_\alpha(x)-\theta(x)|\,\text{d}x=1-\frac{1}{\alpha}\log((2+e^\alpha+e^{-\alpha})/4)\, .
\end{align*}
The V-complexity of $f_\alpha$ is:
\begin{align*}
V(f_{\alpha})&=\frac{\alpha}{4}(\int_{-1}^{1} (\frac{e^{-\alpha x}}{(1+e^{-\alpha x})^2})^{1/2}\textrm{d}x)^2 \\
&=\frac{1}{\alpha}\left( \arctan e^{\alpha/2}-\arctan e^{-\alpha/2}\right)^2 \, .
\end{align*} 
Therefore, 
\begin{align*}
\lim_{\alpha \rightarrow \infty}Q(f_{\alpha})=0 \, .   
\end{align*}
It is shown in Appendix B that for every sequence $\{f_n(x)\}$ of monotonically increasing differentiable functions, which converges in $L_1$ norm to the step function $\theta(x)$, the complexity $V(f_n)$ converges to zero.\\
The condition that the $f_n(x)$ be increasing functions is important. If this is not the case, the statement is likely not true in generality. This situation needs further study.
\item [(c)] {\bf{Power functions}}. Let $[a,b]=[0,1]$ and $f_{\alpha}(x)=x^\alpha$, with $\alpha \geq 0$. A straightforward integration yields
\begin{align*}
    V(f_{\alpha})=\frac{\alpha}{(\alpha +1)^2}\, .
\end{align*} 
For $\alpha \rightarrow \infty$, c.q. $\alpha \rightarrow 0$ the function $f_{\alpha}(x)$ is flat, save for a boundary layer near $x=1$, c.q. $x=0$. This is reflected in the fact that for both these limits, $V(f_{\alpha}) \rightarrow 0$.\\
The maximal complexity occurs when $\alpha=1$, where $V(f_1)=1/4.$
\item [(d)] {\bf{Oscillating functions}} Let $f_n(x)=\cos(n \pi x) $, with  $x \in [0,1]$ and $n$ an integer. For large $n$, this is a rapidly oscillating function. Then, since $|\sin(n \pi x)|$ is periodic with period $1/n$, 
\begin{align*}
    V(f_n)&= \frac{1}{4}\left(\int_0^1 |n \pi\sin(n \pi x)|^{\frac{1}{2}}\, \textrm{d}x\right)^{2}\\
    &=\frac{1}{4} n^{3} \pi\left( \int_0^{1/n}| \sin(n \pi x)|^{\frac{1}{2}}\, \textrm{d}x\right)^{2} \, .
    \end{align*}
 Substituting $y=n \pi x$ gives 
 \begin{align*}
 V(f_n)=\frac{n}{4\pi} \left(\int_0^\pi (\sin(y)^{\frac{1}{2}} \textrm{d}y\right)^{2}\approx0.457\, n \, . 
 \end{align*}
The complexity grows linearly with the number of oscillations. This example shows that the complexity of a function can be arbitrarily large, even for bounded functions.
     
\end{itemize}

\section{V-complexity and compression}
\label{Qandcomp}
An alternative method to define the minimal information necessary to describe a function is to use a compression algorithm. An approximation of the function is encoded in a finite string of symbols from a finite alphabet. The string is then compressed, using some lossless algorithm ${\cal{A}}$. The size of the compressed string is taken as the measure of the amount of information needed to describe the function. The approximation has a certain error $\varepsilon$, and the ${\cal{A}}$ complexity  is defined analogous to (\ref{QNe}).\\
The encoding of $f:[a,b]\rightarrow {\mathbb{R}}$ starts with dividing both the domain and the range into regular grids, with grid sizes $\Delta x$ and $\Delta y$, respectively. Equivalent parameters are $\Delta x$ and $r=\Delta y/\Delta x$. The domain has $n=(b-a)/\Delta x$ sub intervals. The function is sampled at the midpoints $x_k$ of the sub intervals and the values $y_k=f(x_k)$ are rounded to the nearest multiple of $\Delta y$. \\ 
The discretization yields outcomes $y_k=a_k \Delta y$, $k=1,\ldots n$, with $a_k$ nonnegative integers. Concatenating these outcomes, and leaving out the $\Delta y$'s, gives the string ${\bf{s}}=a_1\ldots a_{n-1}$. This string is the encoding of the step function $q_{u}(x;\Delta x,r)$ which has the value $a_k \Delta y$ for all $x \in [a+k \Delta x,a+(k+1)\Delta x]$.
\begin{figure}[h]
\centering
\includegraphics[width=0.6\textwidth]{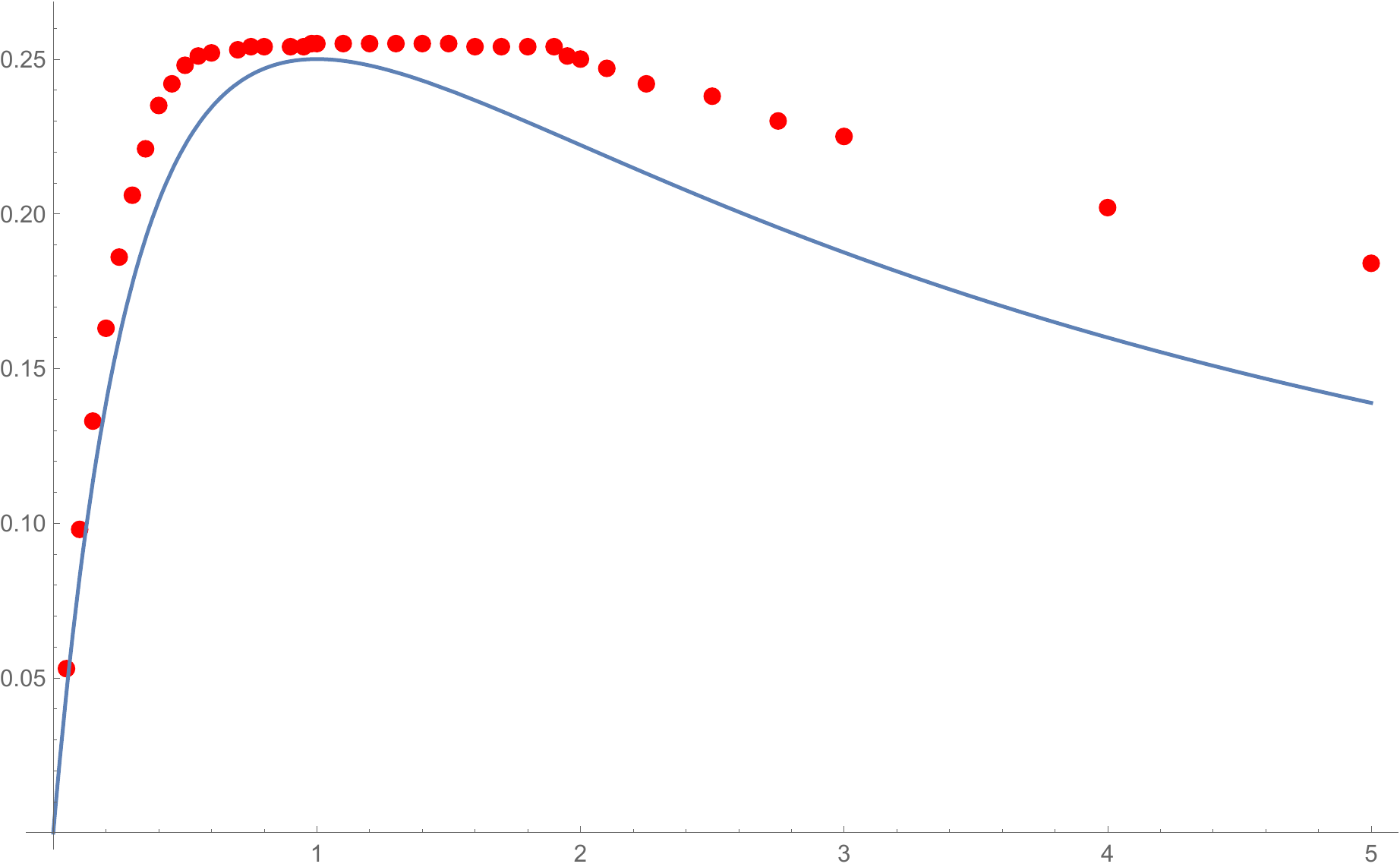}
\caption{Complexity of $f_{\alpha}(x)=x^\alpha$. On the horizontal axis are the values of $0\leq \alpha\leq 5$. Red dots indicate the RLE-complexity $C(f_\alpha)$, the blue graph shows $V(f_\alpha)$.}
\label{RQcomp}
\end{figure}
\subsection{RLE-complexity}
One compression algorithm that can be used is Run Lenth Encoding (RLE). This method counts successive repetitions of a symbol within a string and replaces it with the symbol and the number of repetitions. As an example, let $f(x)=x^4$, $[a,b]=[0,1]$, $\Delta x=0.1$ and $r=1$. Then ${\bf{s}}=0000012358$ and\\ 
$RLE({\bf{s}})=\{(0,5),(1,1),(2,1),(3,1),(5,1),(8,1)\}$. The number of pairs of $RLE({\bf{s}})$ will be denoted by $N_{RLE}(\Delta x,r)$ and is equal to $6$ in this example. The $L_1$ error $||f-q_{u}||_1$ will be denoted by $\varepsilon(\Delta x,r)$. In this example it is equal to $0.03$.\\
The RLE-complexity of a function $f$ is defined as:
\begin{align}
    C(f)=\min_{r \geq 0}\lim_{\Delta x \rightarrow 0}N_{RLE}(\Delta x,r)\,\varepsilon(\Delta x,r)\, ,
\end{align}
assuming the limit exists. As with V-complexity, the RLE-complexity is the product of the amount of information in the RLE-approximation with the error of the approximation, as the error goes to zero. However, these two ingredients depend on the grid ratio $r$, so the minimum over $r$ is taken.\\
An explicit expression for $C(f)$ is desirable, but a preliminary analysis has shown that this is not a simple task, and it will not be pursued here. However, extensive numerical simulations have shown that
\begin{align}
\label{etagam}
    N_{RLE}(\Delta x,r)&=\eta (r)(\Delta x/(b-a))^{-1}=\eta(r)n(\Delta x) \nonumber \\
    \varepsilon(\Delta x,r)&=\gamma(r) \Delta x \, , 
\end{align}
as $\Delta x$ goes to zero.\\
The term $\eta(r)$ is the RLE compression factor. The number of bins in the range is inversely proportional to $\Delta y$, so also to $r$ for fixed $\Delta x$. This implies that $\eta(r)$ is a decreasing function. For $r=0$, every $y$ value is in it's own bin, so no compression is possible. Therefore, $\eta(0)=1$. For very large values of $r$ there are only a few available vertical bins, so the set of outcomes is compressed to an extreme degree and $\eta(r)$ becomes very small.\\
The term $\gamma(r)$ is an increasing function. As the size of the vertical bins increases, and keeping $\Delta x$ fixed, the accuracy of the approximation decreases, which means that the $L_1$ error increases.\\
From (\ref{etagam}) it follows that
\begin{align}
\label{Cf}
    C(f)=\min_{r \geq 0}\, \eta(r)\gamma(r)/(b-a) \, .
\end{align}
There is a connection between RLE-complexity and V-complexity. Returning to the example with $f(x)=x^4$, it can be noted that the step function $q_{u}$ encoded by $0000012358$ has the same value on the first five sub intervals. Hence, it can also be described as a step function on a nonuniform grid by concatenating the first five sub intervals in to one larger interval. The compressed set $RLE({\bf{s}})$ is the encoding of $q_{u}$ on this nonuniform grid.\\
Therefore, RLE-complexity and V-complexity, which is based on the number of intervals in step function approximations with a nonuniform grid, can be expected to be close.\\
Figure  \ref{RQcomp}) shows the two complexities for the family of functions $f_\alpha(x)=x^{\alpha}$ on the interval $[0,1]$. For the RLE encoding, (\ref{Cf}) was applied to numerical evaluations of $\eta(r)$ and $\gamma(r)$. There is a qualitative similarity between the two complexities, but they are not exactly the same. In section \ref{sectappr} it was mentioned that the step function with the equidistribution of error property is possible the optimal approximation. This is not contradicted by Figure \ref{RQcomp}, which shows that for this family of functions the V-complexity is smaller than the RLE-complexity.\\
A remarkable feature of the RLE-complexity for the family $f_\alpha(x)$ is that the minimum in definition (\ref{Cf}) is achieved in $r=0$ for $0.5\leq \alpha \leq 2$. The compression factor for all these values of $\alpha$ is then $\eta(0)=1$, so no compression, and the error factor $\gamma(0)=1/4$. 

\subsection{GZIP-compression}
Run Length Encoding is not used in practical applications very often. Much more popular are GZIP and similar compression programs. These programs use the Lempel-Ziv 77 (LZ77) encoding algorithm \cite{LempelZiv}, in combination with other tools.\\ The LZ77 algorithm encodes finite strings of symbols from a finite alphabet. It moves from left to right through the sequence and replaces a recurring subsequence by a pointer to the position of the first occurrence of that subsequence. For reasons of efficiency, the algorithm only considers symbols at most $M$ positions back from the symbol currently being processed. As the algorithm moves through the sequence, it forgets data that is too far from the current position.\\
The compression of a sequence by LZ77 is generally shorter than by RLE. For example, the string consisting of $n$ repetitions of $01$ cannot be further compressed by RLE, but has a very short LZ77 code.\\
One notable exception to this rule is the case of {\it monotone} sequences. These are sequences where the symbols are in alphabetical order, according to some ordering of the alphabet of symbols. For instance, $abab$ is not monotone, but $cccabb$ is. For these sequences, RLE and LZ77 are equivalent. There is no extra redundancy for LZ77 to remove beyond the repetitions of the individual symbols.\\ This observation applies to functions $f:[a,b]\rightarrow {\mathbb{R}}$. If the function is monotone increasing or decreasing, then so is its discretization, written as a string. Hence, the RLE-complexity and the GZIP complexity of a monotone function are equivalent.\\
This phenomenon is illustrated in Figure \ref{verga} which shows the sizes of compressed files of the discretization of $f_\alpha(x)=x^\alpha$, for RLE, GZIP and Compress (a Mathematica compression algorithm). Note that for RLE, size means the number of pairs in the encoding, whereas for the other two methods it is the number of bytes in the encoded file. In Figure \ref{vergb}, the graphs are normalized and show the compression methods are indeed very similar for this family of monotone functions.

\begin{figure}[h!]
\centering
\begin{subfigure}{0.45\textwidth}
\centering
\includegraphics[width=\textwidth]{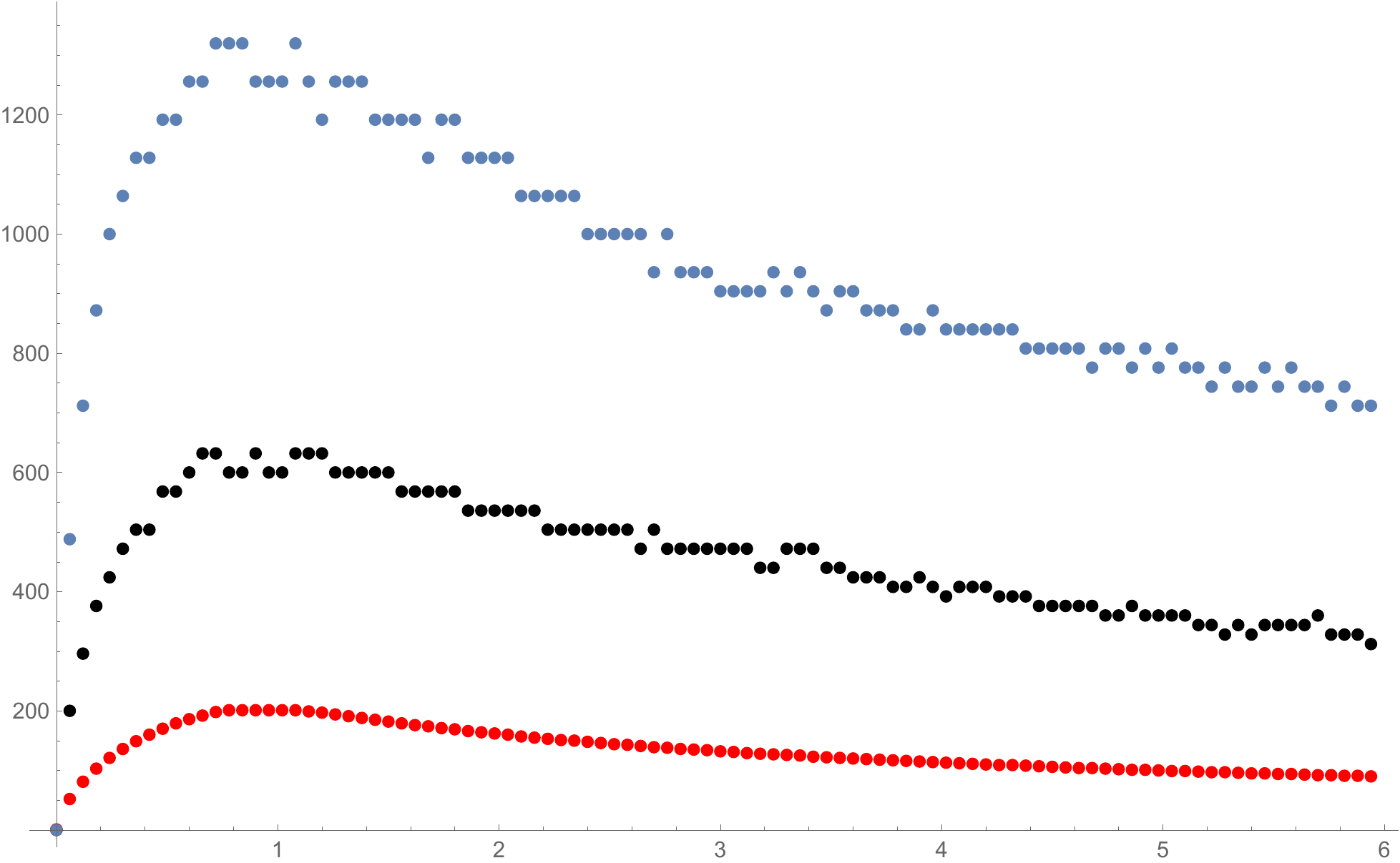}

\caption{File sizes of encoded functions. }
\label{verga}
\end{subfigure}
\begin{subfigure}{0.45\textwidth}
\centering
\includegraphics[width=\textwidth]{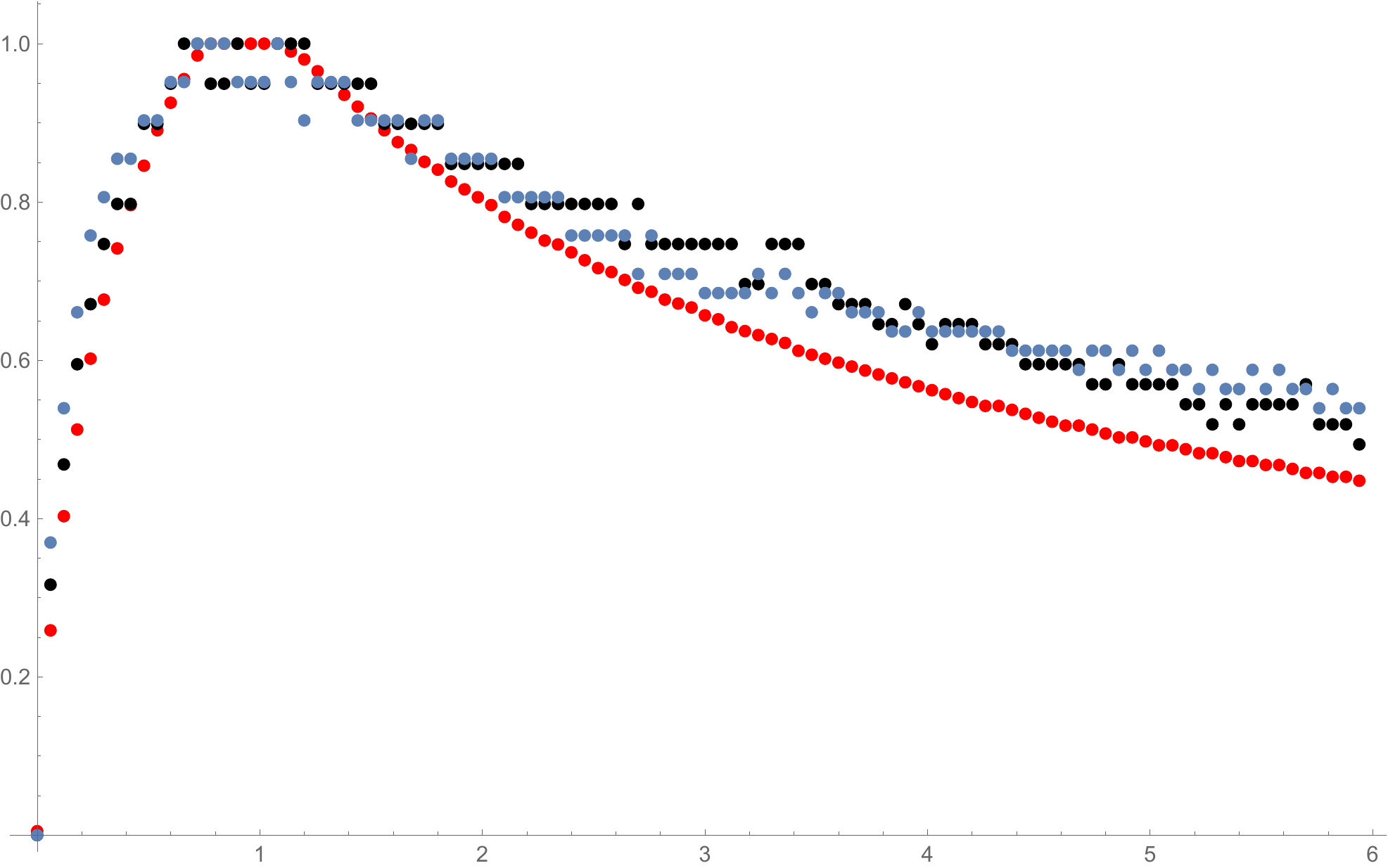}

\caption{Normalized sizes.}
\label{vergb}
\end{subfigure}
\caption{Comparison of RLE, GZIP and Compress encoding of $f(x)=x^\alpha$. The parameter $\alpha$ is on the horizontal axis. The black points correspond to GZIP, blue points to Compress and red points to RLE. For all graphs, $\Delta x=0.005$ and $r=0.8$.}
\label{compLZ77}
\end{figure}
\noindent An important parameter in GZIP complexity is $n/M$, with $M$ the size of the memory window of the GZIP algorithm and $n$ the number of grid points of the discretization. When this ratio is large, the window of the algorithm only covers a part of the domain of $f(x)$. This ratio can be made arbitrarily large by choosing $\Delta x$, or equivalently the approximation error $\varepsilon$, sufficiently small. A relatively short window, as it moves along the sequence, encounters mainly monotone sub sequences.  Hence, if $M$ is kept constant, GZIP encoding again coincides with RLE for large $n$, or equivalently in the limit $\varepsilon \rightarrow 0$.
\begin{figure}[h!]
\centering
\begin{subfigure}{0.45\textwidth}
\centering
\includegraphics[width=\textwidth]{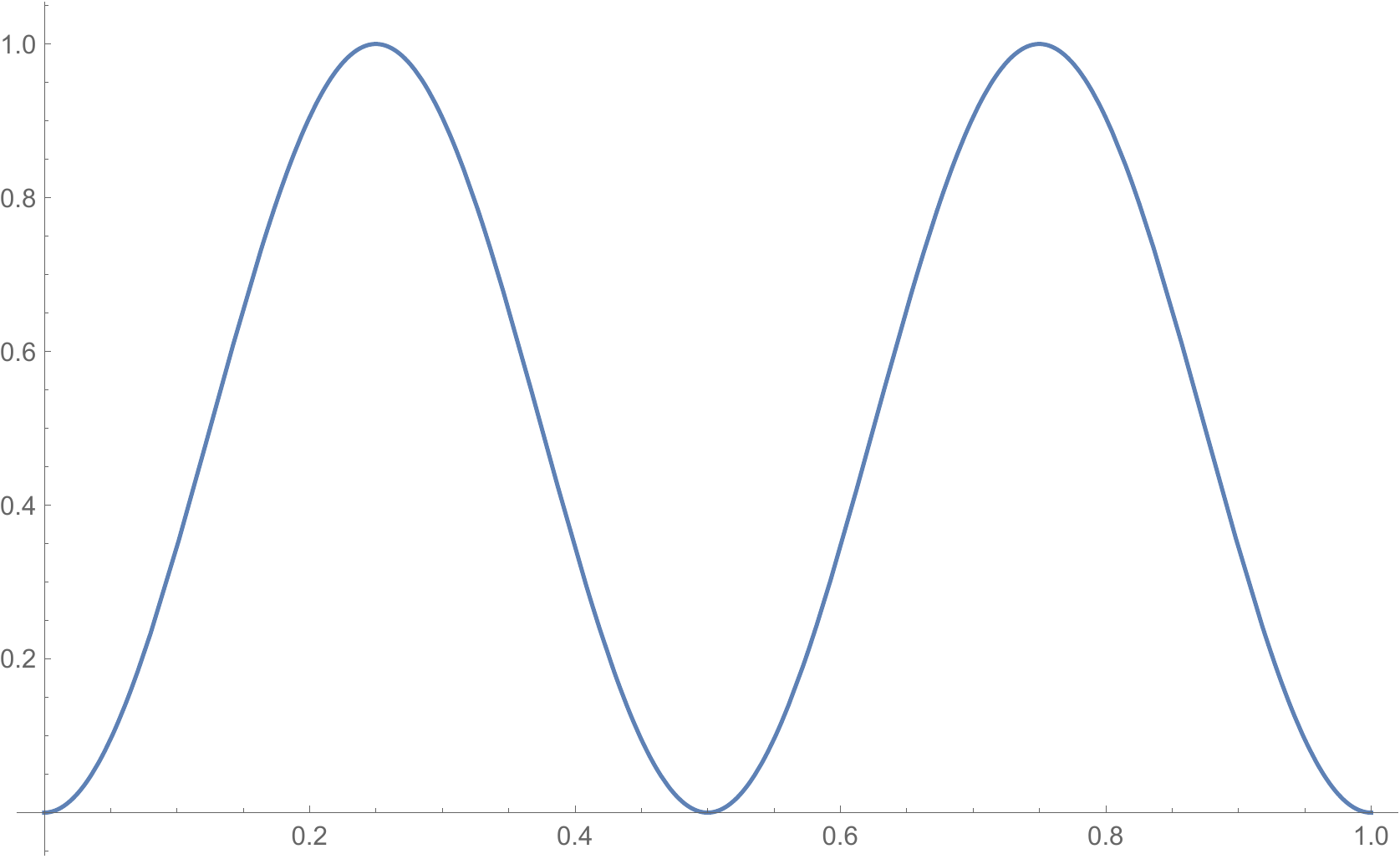}
\caption{$f(x)=\sin^2 \pi x$ }
\end{subfigure}
\begin{subfigure}{0.45\textwidth}
\centering
\includegraphics[width=\textwidth]{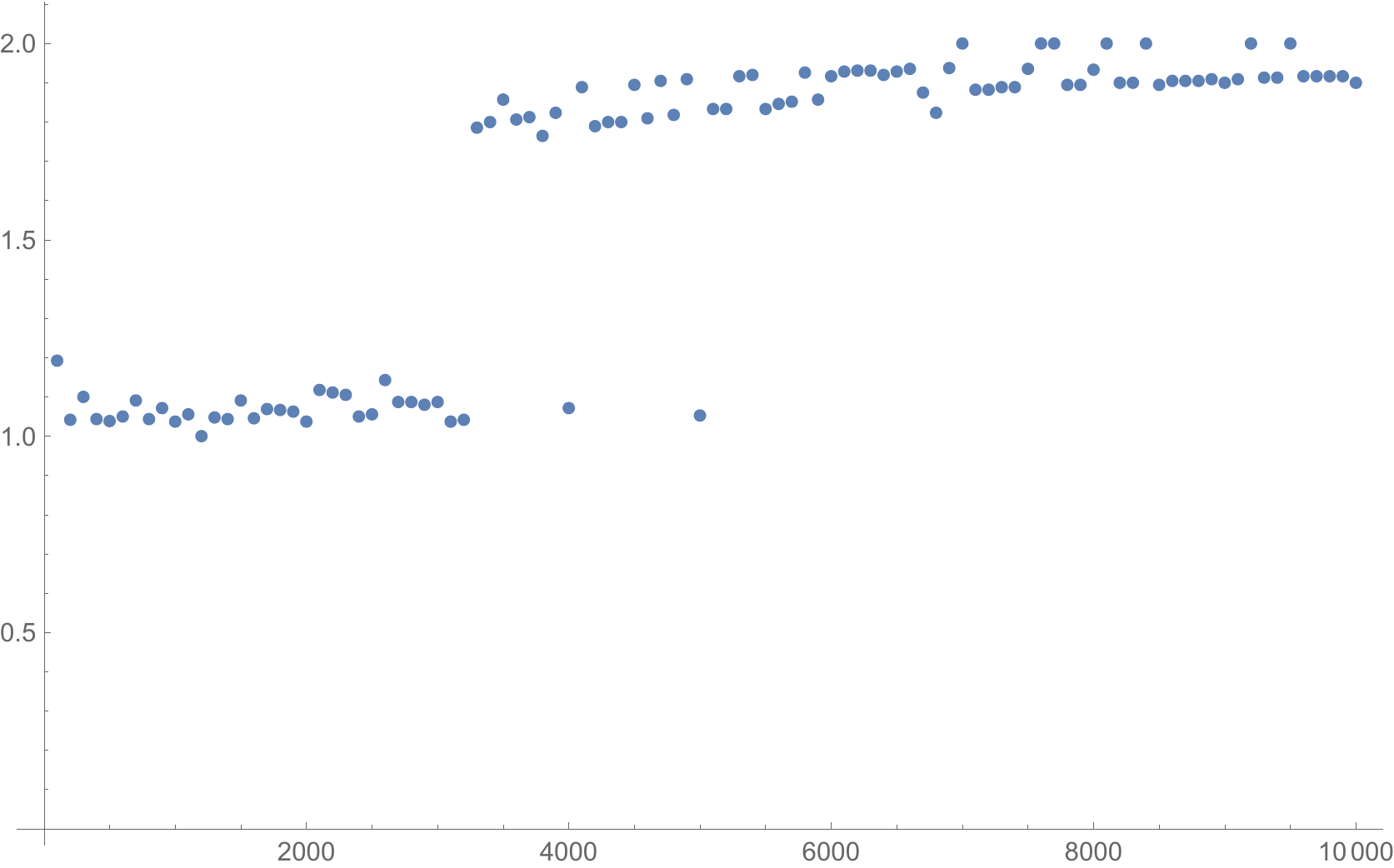}

\caption{Ratio encoding lengths}
\label{ratcomp}
\end{subfigure}
\caption{Ratio of the encoding length of $f(x)=\sin^2 \pi x$ on the interval $[0,1]$ to encoding length over $[0,1/2]$. Horizontally are the number of grid points. $M=4096$ and $r=0.5$.}
\label{ratlen}
\end{figure}
\noindent The above phenomenon is illustrated in figure \ref{ratlen}. The function $f(x)=\sin^2 \pi x$ is encoded and compressed using GZIP, first for the interval $[0,1/2]$ and then for the interval $[0,1]$. The lengths of these compressed files are $l_1$ and $l_2$, respectively. This is repeated for values of $n$ ranging from $n=100$ to $n=10^4$. In figure \ref{ratcomp}, the ratios $l_2/l_1$ are plotted. For $n<M$, this ratio is close to 1. This is because the LZ77 algorithm uses the fact that $f(x)$, restricted to $[1/2,1]$, is a translated copy of the function restricted to $[0,1/2]$. Therefore, the LZ77 code for $f(x)$ on the full interval $[0,1]$ is equal to the code for the function on $[0,1/2]$ plus one pointer to that code. For these values of $n$, LZ77 can fully exploit the redundancy of the function.\\
For $n>M$, however, the window of the LZ77 algorithm cannot hold the full function in its memory. The output of the code will be two copies of the code for the function on $[0,1/2]$ and the ratio $l_2/l_1$ is close to 2. \\
The upshot of this section is twofold. First, it is hypothesized that for a differentiable function $f$, the V-complexity is close to the RLE-complexity of the function. Second, the RLE-complexity is equivalent to the GZIP-complexity, in the limit that the grid size, and therefore the $L_1$ error, goes to zero.

\section{Complexity in a cup}
\label{CCex}

 Consider a cup of cappuccino. In its initial state, the top half of the cup consists of only cream whereas the bottom half is filled with just coffee. On the micro level of the individual particles, the milk and coffee molecules start a random walk, occasionally colliding with each other. This level can not be observed. On the macro level the randomness of the motion of the individual particles has become invisible. What can be observed is the color of the liquid, equivalent with the local density of the cream molecules. The color-distribution evolves continuously in time and space.\\ 
 As time progresses, the color near the boundary between cream and coffee will develop a gradient going from white, through brown to black. This boundary layer will increase in size until all the liquid has the same light brown color. This is the equilibrium state.\\ 
Heuristically, the evolutions in time of complexity and entropy go as follows. At the macro level, the initial state is easily described, as is the final  equilibrium state. For times between between these extremes, however, the state is more complicated. The information-theoretic complexity of the macro level of the system, therefore, goes from almost zero to some maximum and then back to zero as the equilibrium is approached. In contrast, the entropy of the system, which is calculated using the probability distribution of the positions of the individual particles, starts at a relatively low level and then increases monotonically.\\ 
To keep the models as simple as possible, it will be assumed that only diffusion is at work. The initial condition is homogeneous in the $x$ and $y$ directions, so the only dimension that matters is the $z$ component of the position of the particles. While individual particles of coffee or cream can move in the horizontal directions, their averaged motion over time in these directions is zero. The following models are therefore one-dimensional in space.

\subsection{Lattice gas automaton}
 The simplest cellular automata to simulate molecular dynamics are lattice gas automata \cite{Chopard}. 
\begin{figure}[h!]
\centering
\includegraphics[scale=0.8]{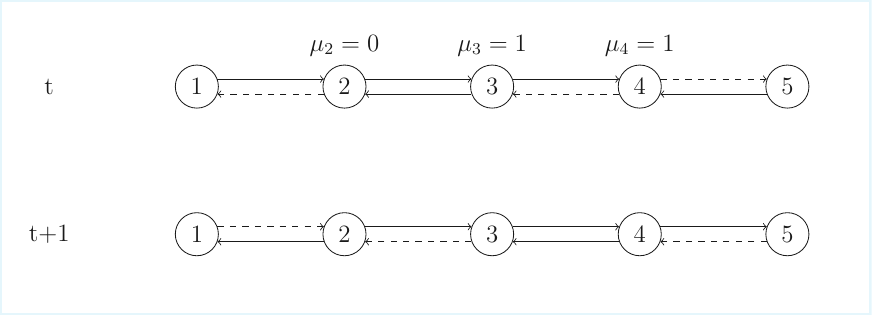}
\caption{The lattice gas model. Dashed lines represent coffee particles, solid lines are cream particles.}
\label{latga}
\end{figure}
\noindent A one-dimensional lattice gas model of the coffee cup can be imagined as a two lane highway. At every $k=1,2, \ldots N$ kilometers there is a roundabout, henceforth called a node.  Particles on the top lane move to the right, and on the bottom lane to the left . All particles move at the same constant speed. Every lane segment contains exactly one particle, either coffee or cream. In the top picture of Figure \ref{latga}, a possible situation for $N=5$ is shown at a time $t$. Note that there are no incoming particles from the left (right) at end-nodes $k=1$ ($k=5$).\\
At time steps $t=1, 2, \ldots$, two particles enter a node, one from the left and the other from the right. At the endpoints only one particle enters. This entering happens simultaneously for all nodes. At every node there is a traffic light which is either red or green. If the light is green, both particles continue in the same direction as they were coming in. They are said to pass each other. If the light is red, both particles turn around and travel back from where they came, on the opposite lane. This situation is called a collision. The end-nodes do not have a traffic light. There, particles will just return in the direction from where they came.\\
The color of the traffic light is a random variable $\mu_k(t)$ which takes on the value $\mu_k(t)=1$ (green) with probability $1/2$ and $\mu_k(t)=0$ (red), also with probability $1/2$. The variables $\mu_k(t)$ are mutually independent, for all $k$ and $t$.\\
Figure \ref{latga} shows the dynamics during one time step. When two particles of the same type enter a node, they pass each other, independent of the value of $\mu$. This happens at nodes $k=2$ and $k=4$. An alternative explanation is that the particles collide, but the outcomes in both interpretations is the same, as the particles are considered to be anonymous. When two particles of a different type enter a node, they collide with probability $1/2$ or pass, also with probability $1/2$. This happens at node $k=3$. In this case, the outcome would have been different in the case that $\mu_3=0$.\\
\begin{figure}[h!]
\centering
\begin{subfigure}{0.4\textwidth}
\begin{minipage}{\columnwidth}
\includegraphics[width=4.5 cm]{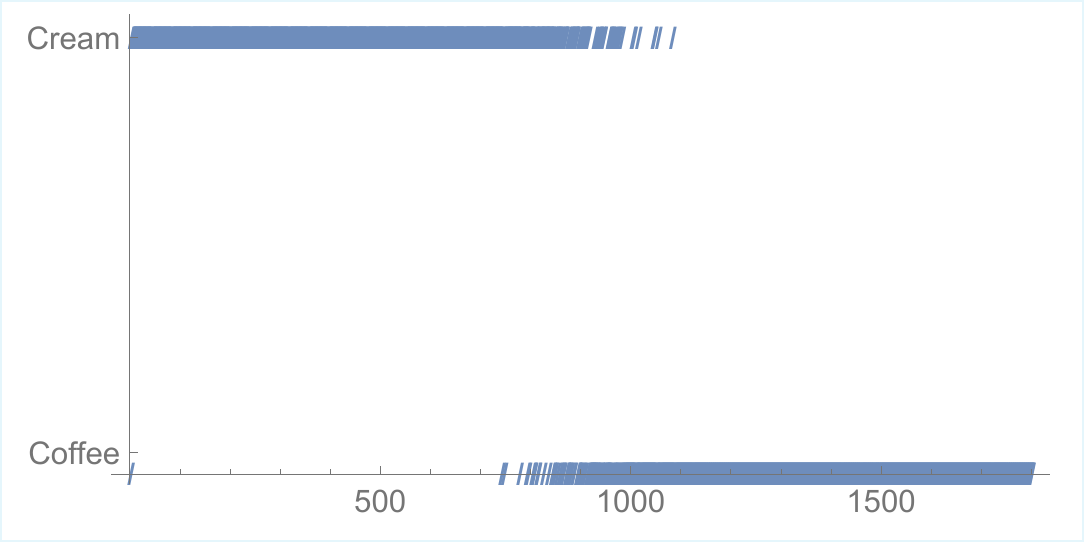}
\end{minipage}
\begin{minipage}{\columnwidth}
\includegraphics[width=4.5 cm]{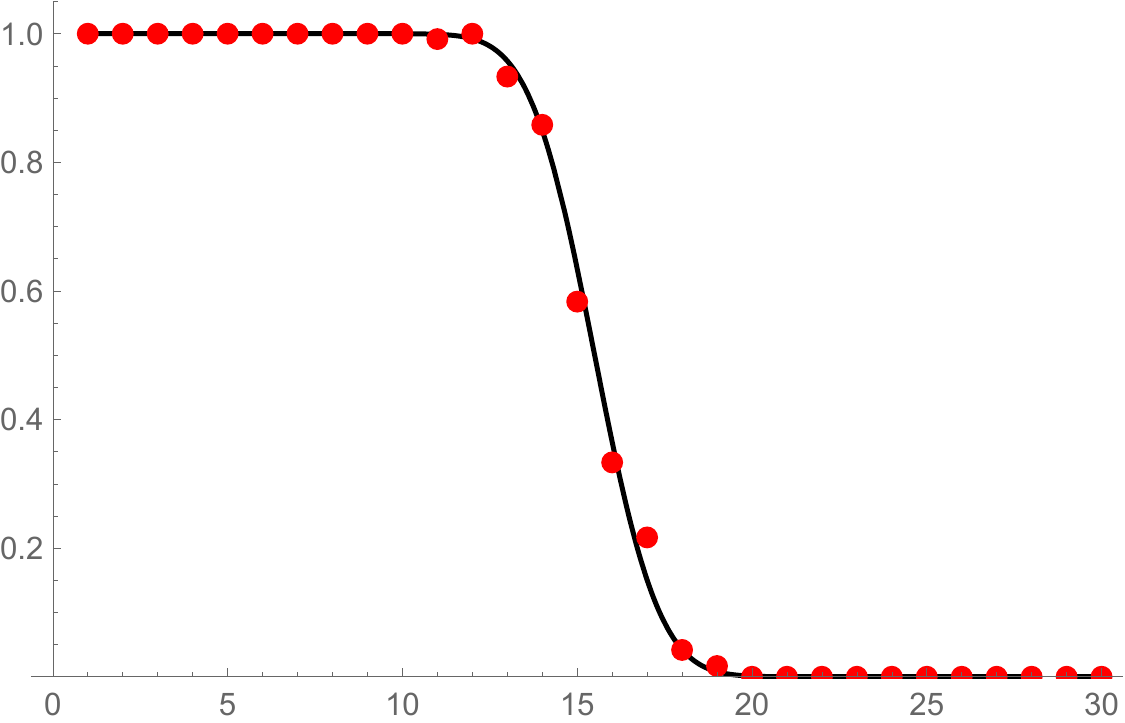}
\end{minipage}
\caption{$t=0.003\,T$.}
\end{subfigure}
\begin{subfigure}{0.4\textwidth}
\begin{minipage}{\columnwidth}
\includegraphics[width=4.5 cm]{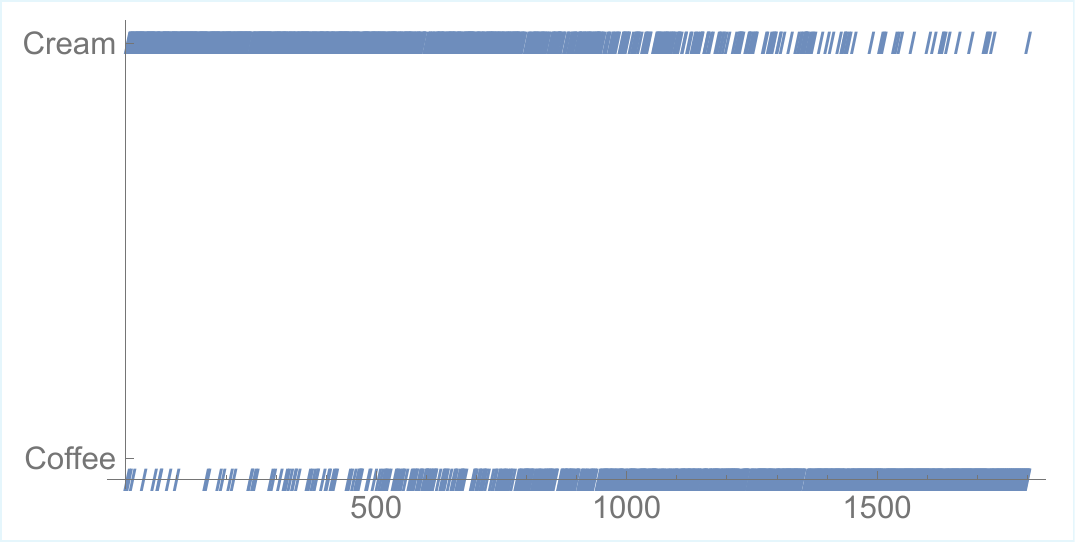}
\end{minipage}
\begin{minipage}{\columnwidth}
\includegraphics[width=4.5 cm]{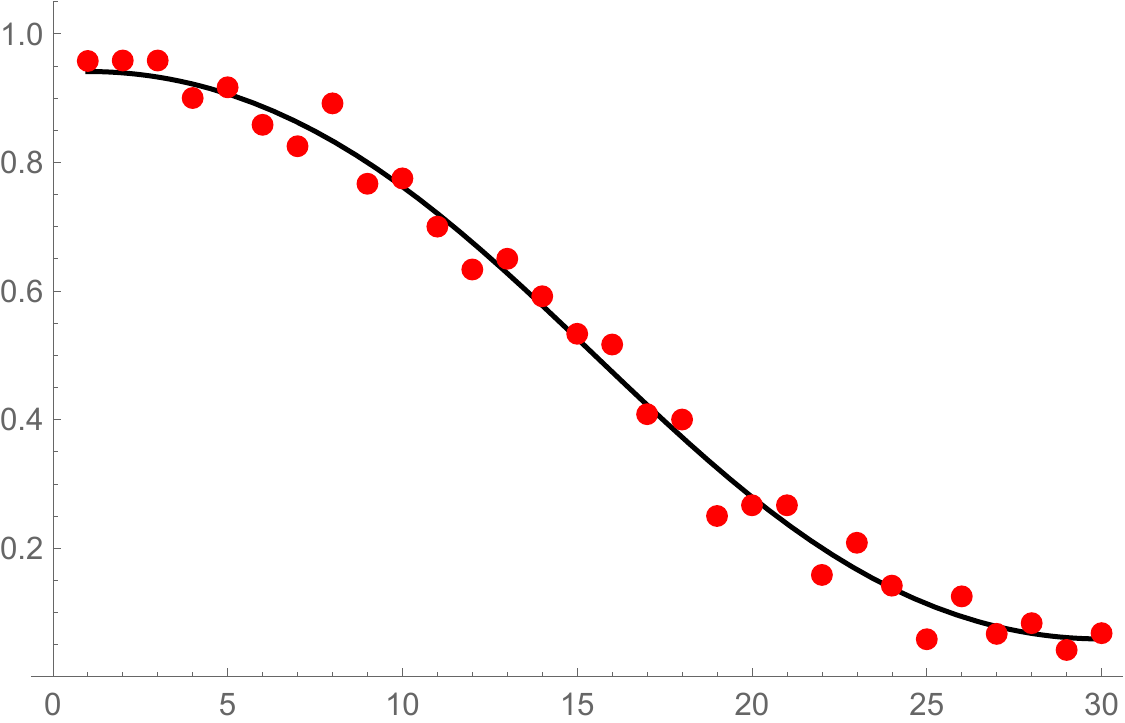}
\end{minipage}
\caption{$t=0.07\,T$.}
\end{subfigure}
\begin{subfigure}{0.4\textwidth}
\begin{minipage}{\columnwidth}
\includegraphics[width=4.5 cm]{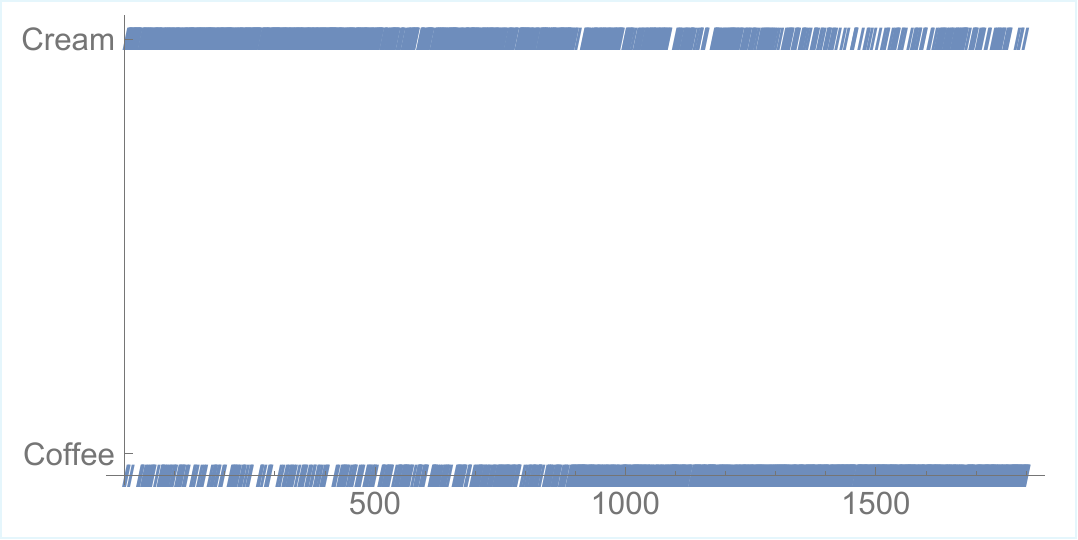}
\end{minipage}
\begin{minipage}{\columnwidth}
\includegraphics[width=4.5 cm]{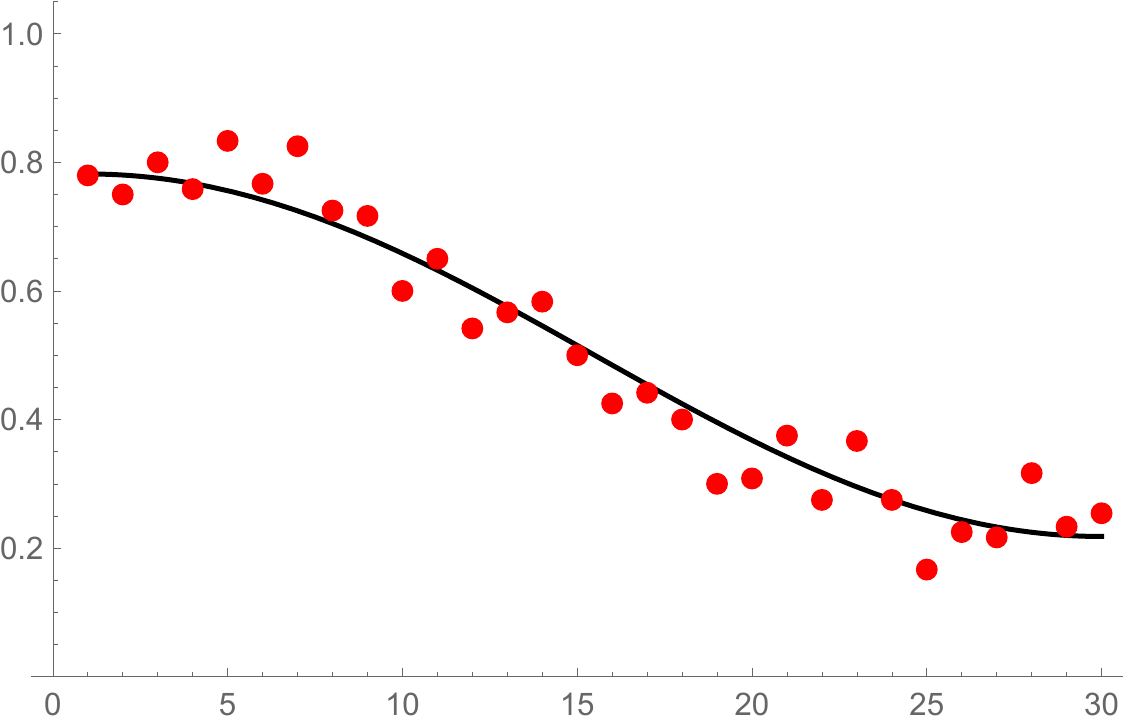}
\end{minipage}
\caption{$t=0.17\,T$.}
\end{subfigure}
\begin{subfigure}{0.4\textwidth}
\begin{minipage}{\columnwidth}
\includegraphics[width=4.5 cm]{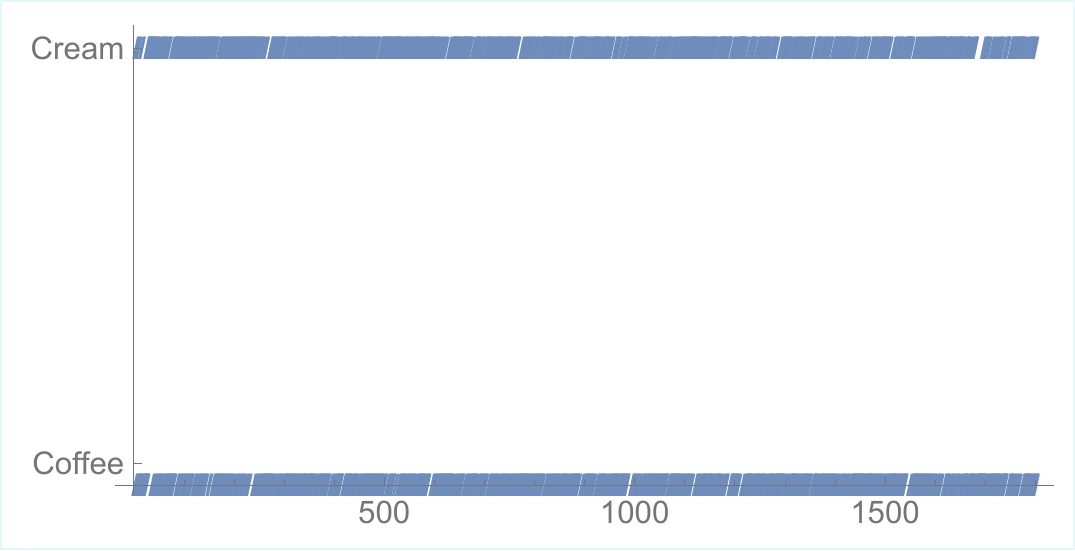}
\end{minipage}
\begin{minipage}{\columnwidth}
\includegraphics[width=4.5 cm]{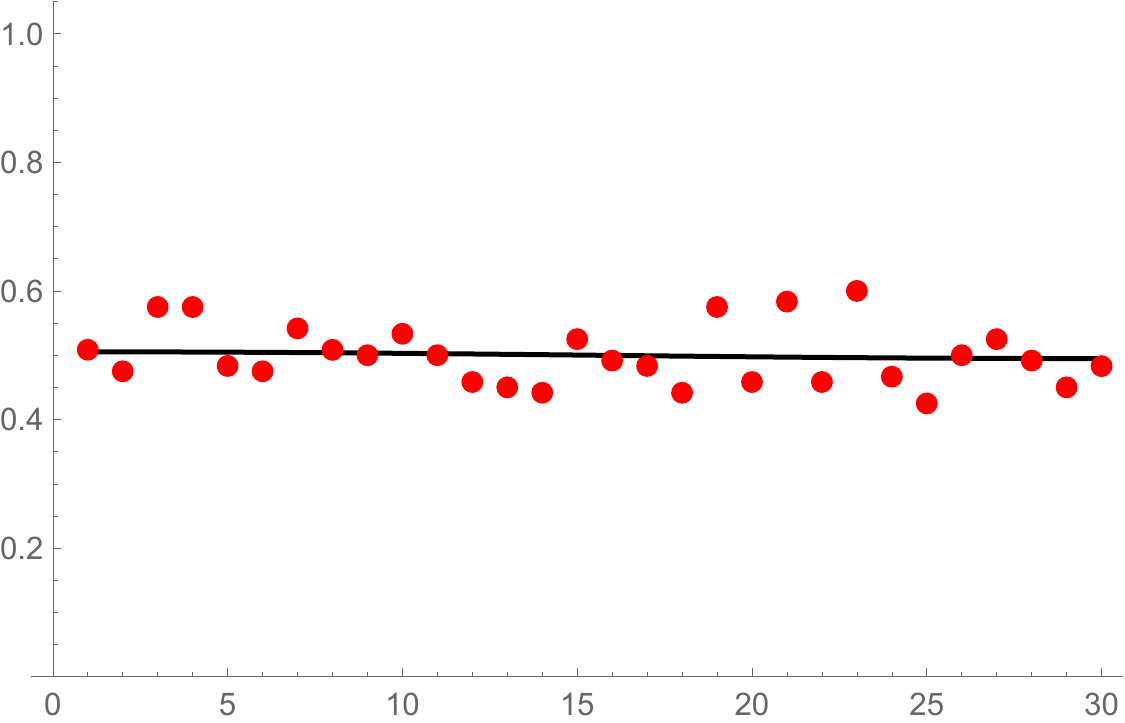}
\end{minipage}
\caption{$t=\,T$.}
\end{subfigure}
\caption{State of the automaton at four different times. Parameters are $N=1800$, $C=30$, $B=30$ and $T=3.24 \times 10^6$. Top graph shows which nodes have a cream/coffee particle entering from the left. The bottom graph shows average color of the cells. The black curve is the solution of (\ref{diff}).}
\label{densdis}
\end{figure}
\noindent The parameters of the system are the number of nodes, $N$ and the total number of time steps $T$. Figure \ref{densdis} shows the result of a simulation with $N=1800$, $T=3.24 \times 10^6$ and an initial condition where for the first 1000 nodes all lanes between them are occupied by cream particles and for the last 1000 nodes with coffee. The top graph of each panel shows the nodes where a cream/coffee particle enters from the left. The picture for the nodes where they enter from the right is much the same and is not shown here.
\begin{figure}[h!]
\centering
\includegraphics[scale=0.5]{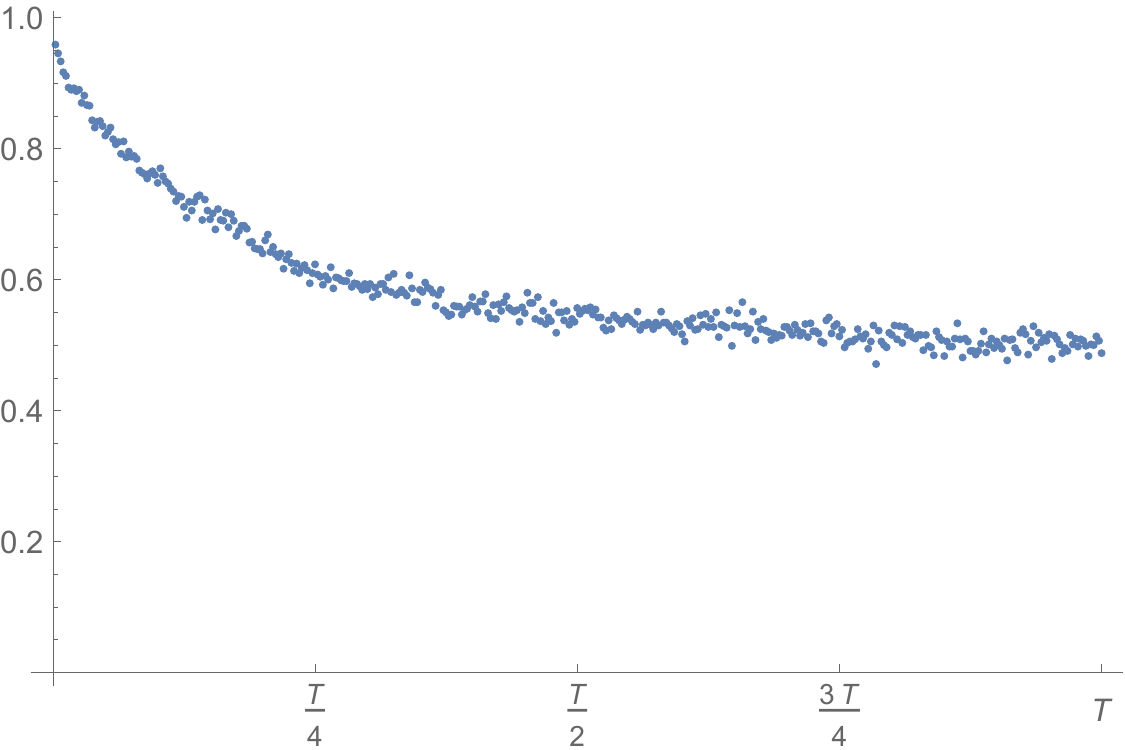}
\caption{Fraction of cream particles in upper half of the cup. $T=3.24 \times 10^6$}
\label{fracwhite}
\end{figure}
Figure \ref{fracwhite} shows the fraction of cream particles in the top half of the cup. This fraction starts at 1 and then decreases to $1/2$, indicating that the cream, and thus also the coffee, has spread out over the whole cup.\\
The color of a lane segment is defined as the average value of the particles traveling on that segment, both going left and going right, with cream$=1$ and coffee$=0$. The color of a lane segment is therefore equal to $1$ for two cream particles, $1/2$ for a coffee/cream combination or $0$ for two coffee particles.

\subsection{Apparent Complexity}

\begin{figure}[h!]
\centering
\includegraphics[width=0.8\textwidth]{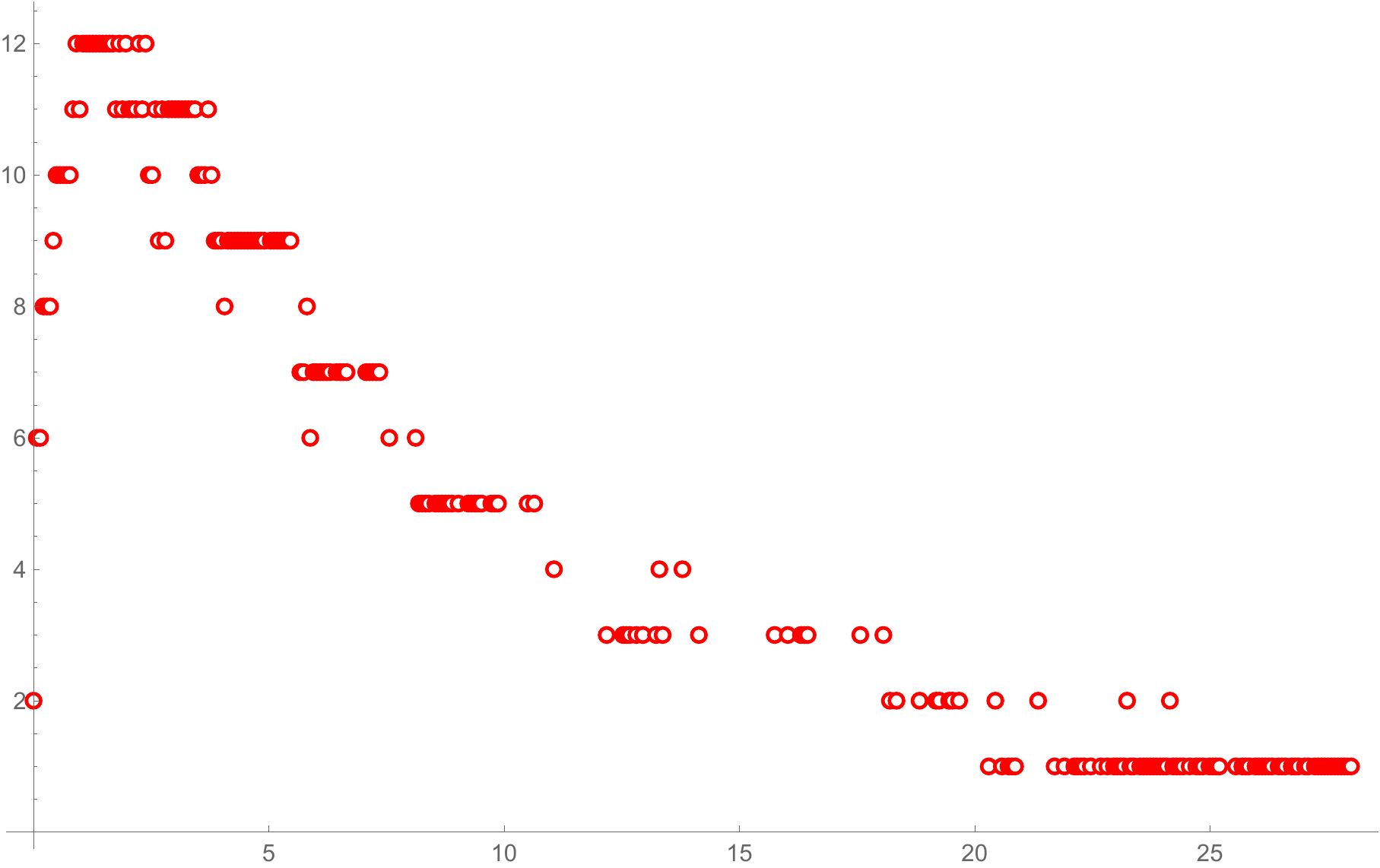}
\caption{Apparent complexity of lattice gas automaton. The compression algorithm used is RLE. Horizontal are time steps in units of $10^4$.}
\label{simulcomp}
\end{figure}
\noindent The definition of Apparent Complexity is given in \cite{Carroll}, in particular for a $n$-dimensional grid with a $0$ or a $1$ on every grid point. The grid is first coarse grained, by replacing the values on each grid points by the average value of nearby grid points. Then, the coarse grained values are rounded to the closest multiple of some $\Delta y$. Finally, the binned grid is compressed using the GZIP algorithm. The length of this file is the apparent complexity of the grid.\\
Note that this definition is similar to Efficient Complexity. By the process of coarse graining, Apparent Complexity gives an explicit method to remove the incidental or random parts of the system. Compression by the GZIP algorithm is a practical implementation of the "concise description" used in the definition of Efficient Complexity.\\
Apparent Complexity can be applied to the lattice gas automaton, with the difference that the grid points are the colors of the line segments and they can have values $0$, $1/2$ or $1$.\\
To calculate the apparent complexity of the model, it was run with $N=800$ and $T=2.88\,.10^5$. The 800 segments were partitioned in 20 groups of each 40 segments and the results in each group were averaged, giving 20 values. The range of the outcomes is $[0,1]$. This interval was partitioned into 12 bins. At every timestep, therefore, the outcome was a list of 20 integers with values between 1 and 12. The simulation was run 20 times and the results averaged. The effect of averaging all simulations is that the averaged lists, for all times, are perfectly ordered, from large to small. In other words, the  averaged lists are monotonic. This is to be expected, since it is extremely unlikely that a layer in the cup is lighter than a layer above it. This also implies that for the calculation of apparent complexity, either RLE or GZIP encoding can be used since these measures are equivalent for monotonic strings. Here, the choice is for RLE.\\
The initial condition is equivalent to the list consisting of the value 12 (the most white) repeated 10 times followed by the value 1 (the most black), also 10 times. The RLE of the initial condition is therefore $\{\{12,10\},\{1,10\}\}$ and the  apparent complexity is 2. As time progresses, more bins appear in the list of outcomes. At a time corresponding to Figure (\ref{densdis})(b), all 12 bins appear in the list of outcomes and the apparent complexity is equal to 12. For larger times, as the distribution becomes flatter, the RLE decreases again. At very large times, all outcomes are in the middle bin, labeled 6, and the apparent complexity equals 1.\\
The  conclusion that can be drawn from these simulations is that apparent complexity performs as expected for the lattice gas automaton model.

\subsection{The diffusion equation}
\begin{figure}[h!]
\centering
\includegraphics[scale=0.35]{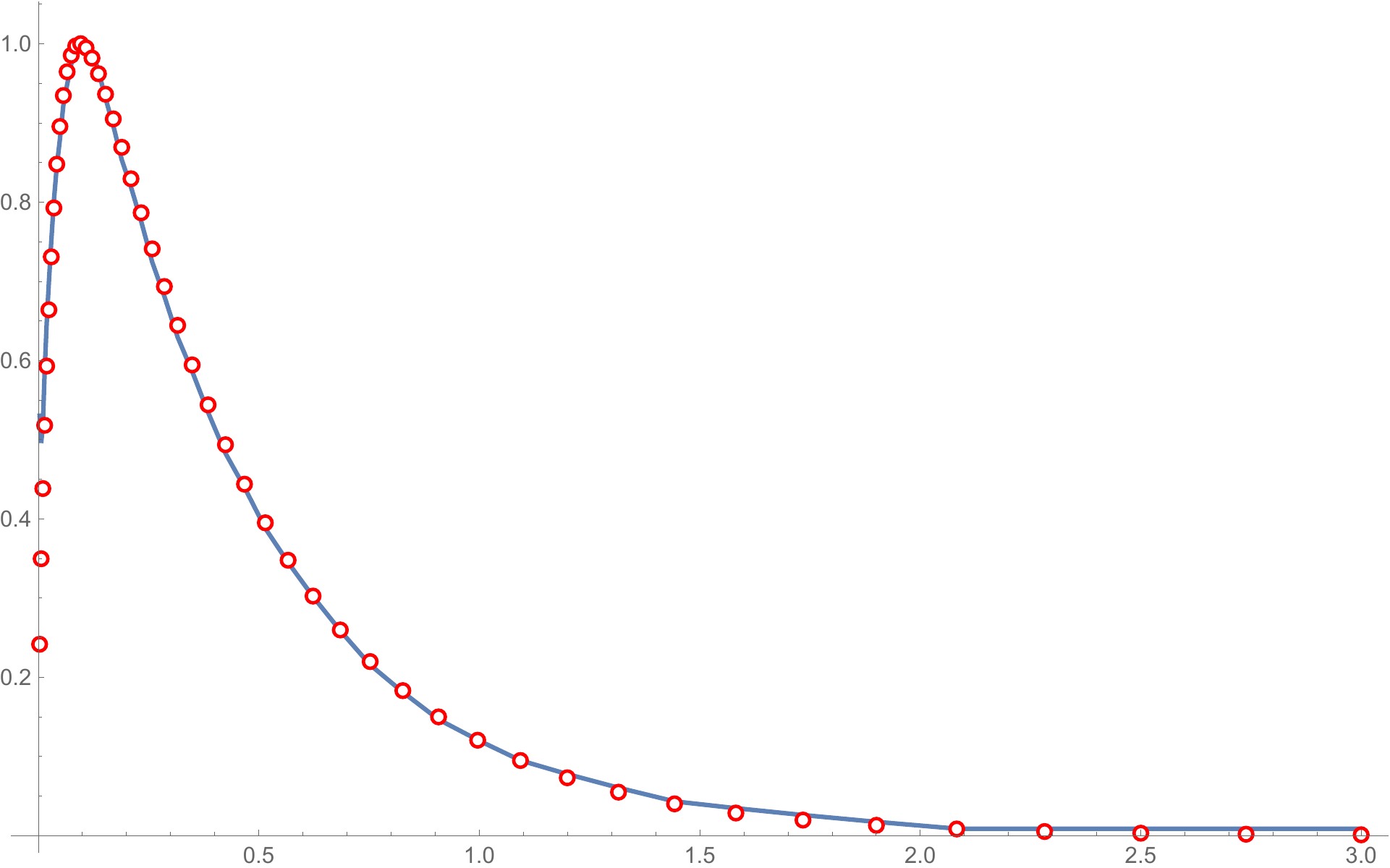}
\caption{V-complexity of the solution of (\ref{diff}) (solid line) and RLE-complexity of discretized solution (red circles). Both complexities are normalized. The horizontal axis displays time.}
\label{diffcomp}
\end{figure}
\noindent The expected value of the color of a specific segment at a specific time, rather than its actual value for a particular run, is a real number somewhere in the interval $[0,1]$. It will be called the density of the color at that position and time.\\
In \cite{Chopard} it is shown that the density of  the color satisfies the classic partial differential equation for diffusion as $N\rightarrow \infty$. In particular, the spatial lattice $\{1,2,\ldots, N\}$ can be mapped to the interval $[-1,1]$, giving a lattice spacing $\lambda=2/N$. The values $-1\leq x\leq 0$ correspond to the upper half of the cup and $0\leq x\leq 1$ to the lower half. Time is mapped from $\{1,2,\ldots, T\}$ to an interval $[0,c]$ by using a time step $\tau=c/T$, where the value of $c$ is yet to be decided. Keeping the ratio $\lambda^2/\tau$ constant, while letting both $\lambda$ and $\tau$ go to zero, yields an equation for the density $u(x,t)$ of the cream particles:
\begin{align}
\label{diff}
    \frac{\partial u}{\partial t}&= D\frac{\partial^2 u}{\partial x^2}\, , 
\end{align}
with $D=\frac{\lambda^2}{2 \tau}$. For convenience, $c=\frac{2T}{N^2}$ is chosen, with the values of $T$ and $N$ as in Figure \ref{densdis}, which makes $D=1$.\\
The boundary values are of Neumann type, because the particles bounce off of the end nodes. Together with the initial condition, this yields.
\begin{align}
\label{bound}
    \quad u_x(-1,t)=u_x(1,t)&=0 \nonumber \\
    u(x,0)&=\begin{cases} 
    1 & \text{if } -1 \leq x <0 \\
    1/2  & \text{if } x=0\\
   0 & \text{if } 0 < x \leq 1  \, .
        \end{cases}
\end{align}
Using a Fourier series expansion, the solution of (\ref{diff}) with the initial and boundary conditions (\ref{bound}) is given by
\begin{align}
\label{uxt}
    u(x,t)=1/2-(2/\pi)\sum_{k=1}^{\infty}\frac{e^{-((2k-1)\pi/2)^2 t}}{2k-1}\sin((2k-1)\pi x/2)\, .
\end{align}
The V-complexity of $u(x,t)$ can be calculated numerically for every value of $t$, using a finite number of terms from (\ref{uxt}). The result is shown in Figure (\ref{diffcomp}), where the outcomes are normalized such that the maximum value of the V-complexity equals 1. To confirm the hypothesis of section \ref{Qandcomp}, the similarly normalized RLE-complexity of the discretized $u(x,t)$ is also plotted, with $r=0.8$. As can be seen, they are practically identical, although this depends on the choice of $r$.\\
The similarity between Figures (\ref{simulcomp}) and (\ref{diffcomp}) is striking and it illustrates that the two approaches to the "complexity in a cup" paradigm lead to the same result. This is explained as follows. The cellular automaton model, after coarse graining and in the limit that $\Delta x\rightarrow 0$, is equivalent to the diffusion equation. Both approaches then apply a measure of complexity. In the case op Apparent Complexity, it is the length of the GZIP encoded output string. This length grows without bound as $\Delta x\rightarrow 0$. Rescaling this measure by $\Delta x\rightarrow 0$ gives the compression complexity, as defined in the previous section. In the case of the diffusion equation, V-complexity is used on the solution of the equation. According to the hypothesis in section \ref{Qandcomp}, GZIP compression complexity and V-complexity are equivalent. So, the Apparent Complexity of the cellular automaton and the (modified) Effective Complexity of the diffusion equation model are equivalent.

\section{Conclusion}
\label{Concl}
Wittgenstein's famous proposition: "everything that can be said, can be said clearly"  can be paraphrased as a motto for information-theoretic complexity as: "everything that can be described, can be described concisely". This paper defines {\it{how}} concisely, for the case of differentiable functions on an interval.\\
Descriptions need a vocabulary. Here, this means a set of functions in terms of which every differentiable function can be expressed. The choice is to take the piecewise constant functions, also known as step functions, as the basic functions. The graphs of these functions show sharp contrasts, such as the one in Figure \ref{d0r45}. Their gray scale pictures resemble one-dimensional Mondriaan paintings, albeit with a rather dull color palette. To the eye, these pictures appear simple, easy to describe. On the other hand, a graph such as in Figure \ref{d2r45} contain a lot of fuzziness and would take more words to describe.\\ 
Any differentiable function can be approximated by a step function, but only to a certain accuracy. By taking approximating step functions with more and more steps, the accuracy becomes ever better. The number of steps and the accuracy of approximations are the ingredients of the main results of this paper. The complexity of a differentiable function can be defined as the smallest ratio of the number of steps in a step function approximation to the accuracy of the approximation, as the accuracy nears perfection. This ratio was named the V-complexity of the function, a nod to its basis in Visual perception. Also, there exists a concrete expression for the V-complexity of a function, in terms of the its derivative.\\ 
If the V-complexity is small, then a step function with a small number of steps suffices to describe the function to a high accuracy. It is a function with fairly sharp contrasts. The function is easy on the eyes, in a quite literal sense.\\
\\
Whereas V-complexity is based on the appearance of the graph of a function, a seemingly different approach is be based on the {\it{compressibility}} of the function. A simple function, in this approach, would be one whose approximation can be compressed into a short file. Surprisingly, these two different approaches lead to qualitatively equivalent results.\\\\
V-complexity can be used as an ingredient in the characterization of dynamical systems with many interacting elements, also known as Complex Systems. The Effective Complexity of such systems is defined as "the length of a concise description of the system's perceived regularities" \cite{GellMann}. If the description of the regularities is a differentiable function on an interval, its description generally has an infinite length. It then makes sense to amend the definition of Effective Complexity into "the V-complexity of the function $f$ which describes a system's perceived regularities".\\
This idea is applied to the example of diffusion of cream in a cup of coffee. It is shown that using the diffusion equation and the amended version of Effective Complexity leads to a result equivalent to starting with a cellular automaton, course graining and applying compression. This latter approach is called Apparent Complexity \cite{Carroll}. The reason is that V-complexity and compression complexity are equivalent and that the automaton is described by the diffusion equation, in the limit that the grid size goes to zero.\\
\\
Much work is left to be done in perfecting the concept of V-complexity. Some statements made here are actually hypotheses, although supported by simulations. This holds for the assumption that the minimal number of steps in the step function approximation is close, or even equal, to the number found using the equidistribution of errors principle. Also, the equivalence of V-complexity and compression complexity needs further clarification.\\
V-complexity can possibly be generalized to functions of several variables, which would be very useful for the study of Complex Systems with more than one spatial variable.\\
This paper, then, is hopefully just the start of characterizing Complex Systems which are described in terms of differentiable functions.

\newpage

\section*{Appendix A} 
In the following, it is assumed that there are no subintervals $[c,d] \subset [a,b]$ such that $f(x)=\alpha$ for all $x \in [c,d]$. This is not a restrictive assumption, since such an interval where $f(x)$ is constant can be removed by defining ${\hat{f}}(x)=f(x)$ for $a\leq x \leq c$ and ${\hat{f}}(x)=f(x+d-c)$ for $c\leq x \leq b+c-d$. For a given $\varepsilon >0$ the optimal approximation of $f$ is then equal to that of ${\hat{f}}$, with the addition of the interval $([c,d],\alpha)$. In this case, the value of $N(\varepsilon)$ for $f$ is increased by one, compared to that of ${\hat{f}}$. This increase becomes asymptotically insignificant when $\varepsilon \rightarrow 0$.\\ \\
Recall that, for a $f:[a,b]\rightarrow {\mathbb{R}}$ and a given $\varepsilon >0$, the stepfunction approximation consists of a set $\{\{I_1,g_1\}, \ldots,\{I_N,g_{N(\varepsilon)}\}\}$. It will be assumed that for small values of $\varepsilon$, the length of the intervals $I_k$ is also small. On an interval $I_k=(x_{k-1},x_k]$, the function $f(x)$ can then be approximated by
\begin{align}
\label{Taylor}
f(x)\approx f(x_{k-1})+f'(x_{k-1})(x-x_{k-1}) \, .
\end{align}
The optimal value for $g_k$ follows from
\begin{lemma}
Let $l(x)=\alpha x+\beta$ for $x\in[c,d]$. The value of $g$ that minimizes
\begin{align}
\label{err}
\int_c^d |l(x)-g|\,\textrm{d}x
\end{align}
is
\begin{align}
\label{qzero}
    g=\alpha(c+d)/2+\beta=l((c+d)/2) \, ,
\end{align}
and the corresponding error is 
\begin{align}
\label{perr}
    ||l-g||_1=(1/4)|\alpha|(d-c)^2\, .
\end{align}
\end{lemma}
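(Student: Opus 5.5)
The plan is to view $E(g)=\int_c^d |l(x)-g|\,\textrm{d}x$ as a convex function of the single real variable $g$ and to locate its minimizer directly. The degenerate case $\alpha=0$ is trivial: $l\equiv\beta$, so $g=\beta$ gives zero error, which agrees with (\ref{qzero}) and (\ref{perr}). For $\alpha\neq 0$ I would first reduce to $\alpha>0$. Replacing $l$ by $-l$ and $g$ by $-g$ leaves $E$ unchanged, so the negative case follows by symmetry.

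With $\alpha>0$, $l$ is a strictly increasing bijection from $[c,d]$ onto $[l(c),l(d)]$. If $g\le l(c)$, then $E(g)=\int_c^d (l(x)-g)\,\textrm{d}x$, which is decreasing in $g$. If $g\ge l(d)$, then $E$ is increasing in $g$. So the minimizer lies in $[l(c),l(d)]$, and there $g=l(\xi)$ for a unique $\xi\in[c,d]$. This splits the integral into
\begin{align*}
E=\int_c^{\xi}(g-l(x))\,\textrm{d}x+\int_{\xi}^{d}(l(x)-g)\,\textrm{d}x .
\end{align*}
Since $l(\xi)=g$ makes the boundary terms cancel, differentiating in $g$ (or in $\xi$) gives $E'(g)=(\xi-c)-(d-\xi)$. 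This is the familiar statement that the $L_1$ minimizer is the median of the values of $l$. Setting $E'(g)=0$ gives $\xi=(c+d)/2$, that is, $g=l((c+d)/2)=\alpha(c+d)/2+\beta$, which is (\ref{qzero}). Convexity of $E$, as an integral of convex functions of $g$, guarantees that this critical point is the global minimum.

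The error formula comes from geometry. With $m=(c+d)/2$ and $h=(d-c)/2$, the graph of $|l(x)-l(m)|=\alpha|x-m|$ bounds two congruent triangles, each with base $h$ and height $\alpha h$. Their total area is $\alpha h^2=\tfrac14\alpha(d-c)^2$. Restoring the sign gives (\ref{perr}) with $|\alpha|$.

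I expect no real obstacle here. The only point needing care is justifying differentiation of $E$ across the kink of the absolute value. That difficulty disappears if I compute $E$ explicitly as a quadratic in $\xi$ on $[c,d]$ and minimize that quadratic, with the monotonicity argument outside $[l(c),l(d)]$ covering the remaining values of $g$.
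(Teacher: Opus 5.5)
Your proof is correct and is essentially the paper's argument: your $\xi$ with $l(\xi)=g$ is exactly the paper's $\tilde g=(g-\beta)/\alpha$. The paper writes the error as the quadratic $\alpha\bigl((\tilde g-c)^2+(d-\tilde g)^2\bigr)/2$, sets its derivative to zero at $\tilde g=(c+d)/2$, and substitutes, which is the same computation as your $E'(g)=(\xi-c)-(d-\xi)$ followed by the two-triangle area. Your monotonicity argument for $g\notin[l(c),l(d)]$ is a small improvement, since the paper's quadratic formula only holds for $\tilde g\in[c,d]$ and the paper does not treat the other case.
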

$\textbf{Proof:}$ The statement is true for $\alpha=0$, so assume $\alpha > 0$. The case $\alpha <0$ is equivalent. Substituting $l(x)=\alpha x+\beta$ in (\ref{err}), and writing
\begin{align}
    \tilde{g}=\frac{g-\beta}{\alpha} \,,
\end{align} yields
\begin{align}
\label{ler}
\int_c^d |l(x)-g|\,\textrm{d}x&=\alpha \int_c^d |x-\tilde{g}|\,\textrm{d}x\nonumber \\
&= \alpha\int_c^{\tilde{g}} (\tilde{g}-x)\,\textrm{d}x+\int_{\tilde{g}}^d (x-\tilde{g})\,\textrm{d}x \nonumber \\
&=\alpha  \left( (\tilde{g}-c)^2+(d-\tilde{g})^2\right)/2\, .
\end{align}
Differentiating (\ref{ler}) with respect to $\tilde{g}$ shows that $\tilde{g}=(c+d)/2$ is a minimum, which corresponds to (\ref{qzero}). Substituting $\tilde{g}$ in (\ref{ler}) gives (\ref{perr}). $\qed$\\ \\
{\bf{Proof}} of Theorem \ref{thm1}.\\
Using the approximation (\ref{Taylor}), the expression for the error (\ref{perr}), and $\alpha=f'(x_{k-1})$, the equidistribution condition (\ref{equi1}) applied to $I_k=(x_{k-1},x_k]$ becomes
\begin{align}
   (1/4)|f'(x_{k-1})|(x_{k}-x_{k-1})^{2}=\varepsilon/ N(\varepsilon) \, .
\end{align}
This yields a recursion for the endpoints $x_k$ of the intervals $I_k$, starting with $x_0=a$:
\begin{align}
\label{recur}
    x_{k}-x_{k-1}=\left(\frac{4 \varepsilon}{|f'(x_{k-1})|N(\varepsilon)}\right)^{1/2}\, , \quad x_0=a\, .
\end{align}
The solution of (\ref{recur}) depends on $N(\varepsilon)$ which at this point is still unknown. It can be found using the second equidistribution condition (\ref{equi2}), which becomes
\begin{align}
\label{closure}
    x_{N(\varepsilon)}=b \, .
\end{align}
To solve (\ref{recur}) and (\ref{closure}), an approximation to $x_k$, $k=0,1,\ldots, N(\varepsilon)$ will be made, taking advantage of the fact that (\ref{recur}) can be transformed to a differential equation as $\varepsilon \rightarrow 0$.\\
Define 
\begin{align}
    y(s)=x(\lfloor N(\varepsilon) s \rfloor)\, , \quad 0 \leq s \leq 1 \, ,
\end{align}
where $\lfloor . \rfloor$ is the floor function and $x(k) \equiv x_k$.
Conversely,
\begin{align}
\label{xfuny}
    x(k)=y(k/N(\varepsilon))\, , \quad k=0,1,\ldots N(\varepsilon)\, .
\end{align}
Define 
\begin{align}
\label{delts}
   \Delta s(\varepsilon)=1/N(\varepsilon)\, . 
\end{align} Then
\begin{align}
    x(k)-x(k-1)&=y(k/N(\varepsilon))-y(k/N(\varepsilon)-\Delta s(\varepsilon)) \nonumber\\
    \label{ysk}
    &=y(s_k)-y(s_k-\Delta s(\varepsilon)) \, .
\end{align}
Substituting (\ref{recur}) in (\ref{ysk}), dividing by $\Delta s(\varepsilon)$ and using (\ref{delts}) yields
\begin{align}
\label{preydif}
    \frac{y(s_k)-y(s_k-\Delta s(\varepsilon))}{\Delta s(\varepsilon)}=\left(\frac{\varepsilon} {\Delta s(\varepsilon)} \right)^{1/2}\left(\frac{4}{|f'(s_k)|}\right)^{1/2}\, .
\end{align}
This equation will only lead to a non-trivial differential equation if.
\begin{align}
    \frac{\varepsilon} {\Delta s(\varepsilon)}={\cal{O}}(1)\, \quad {\textrm{as} \,\,\varepsilon \rightarrow 0} \, .
\end{align}
Specifically, define $V(f)$ through:
\begin{align}
\label{defdels}
    \Delta s(\varepsilon)=\frac{\varepsilon}{V(f)} \quad  \leftrightarrow \quad N(\varepsilon)=V(f) \varepsilon^{-1}
\end{align}
With this choice and taking $\varepsilon \rightarrow 0$, (\ref{preydif}) becomes 
\begin{align}
\label{dvy}
    y'(s)=V(f)^{1/2}\left(\frac{4}{|f'(s)|}\right)^{1/2} \, , \quad y(0)=a \, .
\end{align}
This can be rewritten as an integral equation:
\begin{align}
\label{inty}
    \int_a^{y(s)}|f'(q)|^{1/2}\, \textrm{d}q=\left( 4V(f)\right)^{1/2}\, s \, .
\end{align}
In terms of $y(s)$, the second equidistribution equation (\ref{closure}) is \begin{align}
\label{y1b}
    y(1)=b \, .
\end{align}
Substituting (\ref{y1b}) in (\ref{inty})  then gives
\begin{align}
    V(f)=\frac{1}{4}\left(\int_a^b|f'(x)|^{1/2}\textrm{d}x\right)^{2}\,.
\end{align}
This proves part (a) of Theorem 1.\\ \\
The solution of (\ref{dvy}) can be used to construct the values of $x_k$, through the use of (\ref{xfuny}). This yields:
\begin{align}
    x_k=y(\frac{\varepsilon k}{V(f)}) \, , \quad k=0,1,\ldots, N(\varepsilon)\, .
\end{align}
The values of $g_k$ follow from (\ref{qzero}) and (\ref{Taylor}):
\begin{align}
     g_k=f((x_{k-1}+x_k)/2) \, , \quad k=1,2,\ldots, N(\varepsilon)\, .
\end{align}
The expressions for $x_k$ and $g_k$ are valid for $\varepsilon \rightarrow 0$. $\qed$

\section*{Appendix B}
The proof of the theorem mentioned in section \ref{exmp} is based on the following
\begin{lemma}[Reverse Hölder inequality]\ \\
Let $p\geq 1$ and $h_1$ and $h_2$ measurable functions on some interval $X$. Define
\begin{align*}
   ||h_1||_{1/p}=(\int_X |h_1(x)|^{1/p}\, \text{d}x)^p \quad \text{and} \quad ||h_2||_{-1/(p-1)}=(\int_X |h_2(x)|^{-1/(p-1)}\, \text{d}x)^{-(p-1)} \, .
\end{align*}
Then,
\begin{align*}
    ||h_1 h_2||_1\geq ||h_1||_{1/p}||h_2||_{-1/(p-1)} \, .
\end{align*}
\end{lemma}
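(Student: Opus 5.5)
The plan is to derive the reverse inequality from the ordinary Hölder inequality, applied to a cleverly chosen factorization of $|h_1|^{1/p}$. Assume $p>1$; the case $p=1$ is read as the limiting convention and is immediate. We may also assume $\|h_1h_2\|_1<\infty$ and that $h_2\neq 0$ almost everywhere where $h_1\neq 0$; otherwise the inequality is trivial or the right-hand side vanishes.

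\emph{Factorization.} Write
\begin{align*}
|h_1|^{1/p}=|h_1h_2|^{1/p}\,|h_2|^{-1/p}.
\end{align*}
Apply Hölder's inequality with exponents $p$ and $p'=p/(p-1)$, so that $1/p+1/p'=1$. This gives
\begin{align*}
\int_X |h_1|^{1/p}\,\textrm{d}x\leq \left(\int_X |h_1h_2|\,\textrm{d}x\right)^{1/p}\left(\int_X |h_2|^{-\frac{1}{p}\cdot\frac{p}{p-1}}\,\textrm{d}x\right)^{(p-1)/p}.
\end{align*}
The exponent in the last integral is $-1/(p-1)$, which is exactly the one appearing in $\|h_2\|_{-1/(p-1)}$.

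\emph{Raising to the power $p$.} Raising both sides to the $p$-th power turns this into
\begin{align*}
\|h_1\|_{1/p}\leq \|h_1h_2\|_1\left(\int_X|h_2|^{-1/(p-1)}\,\textrm{d}x\right)^{p-1}=\|h_1h_2\|_1\,\|h_2\|_{-1/(p-1)}^{-1}.
\end{align*}
Multiplying through by $\|h_2\|_{-1/(p-1)}$ yields the claim.

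\emph{Degenerate cases.} The only step needing care is the bookkeeping for these cases.
\begin{itemize}
\item If $\int_X|h_2|^{-1/(p-1)}=\infty$, for instance when $h_2$ vanishes on a set of positive measure, then $\|h_2\|_{-1/(p-1)}=0$ and the right-hand side is $0$, so nothing needs proving.
\item If $h_1h_2\notin L_1$, the left-hand side is infinite and the claim is trivial.
\end{itemize}
With these conventions (and $0\cdot\infty$ interpreted consistently), the Hölder step is legitimate. The main obstacle is therefore only stating these conventions cleanly; the analytic content is the single application of Hölder's inequality.
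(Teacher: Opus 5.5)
Your argument is correct and follows the route the paper has in mind. The paper only remarks that the lemma follows from the standard Hölder inequality, and your factorization $|h_1|^{1/p}=|h_1h_2|^{1/p}|h_2|^{-1/p}$ with conjugate exponents $p$ and $p/(p-1)$ is the standard way to make that remark precise. Your treatment of the degenerate cases and of $p=1$ supplies details the paper omits; for $p=1$ the exponent $-1/(p-1)$ is undefined as written, and the natural limiting reading of $\|h_2\|_{-1/(p-1)}$ is the essential infimum of $|h_2|$.
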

The lemma follows from the standard Hölder inequality.\\
When applied to $h_1(x)=f'(x)$, $h_2(x)=1$, $p=2$ and $X=[c,d]$, the inequality yields:
\begin{align}
\label{revH}
    \int_c^d |f'(x)|\, \text{d}x \geq (\int_c^d|f'(x)|^{1/2}\,\text{d}x)^2(d-c)^{-1} \, .
\end{align}
For monotonically increasing $f(x)$, it follows that $|f'(x)|=f'(x)$ and (\ref{revH}) can be written as
\begin{align}
   (\int_c^d|f'(x)|^{1/2}\,\text{d}x)^2\leq  (f(d)-f(c))(d-c)\, .
\end{align}
Define 
\begin{align*}
    \theta(x)&=\begin{cases} 
    0 & \text{if } -1 \leq x <0 \\
    1/2  & \text{if } x=0\\
   1 & \text{if } 0 < x \leq 1  \, .
        \end{cases}
\end{align*}
\begin{theorem}\ \\
Let $\{f_n(x)\}$, $n=1,2,\ldots$ be a sequence of monotonically increasing differentiable functions on $[-1,1]$, such that $\lim_{n\rightarrow \infty} ||f_n(x)-\theta(x)||_1=0$. Then
 \begin{align*}
     \lim_{n\rightarrow \infty} V(f_n(x))=0 \,.
 \end{align*}
\end{theorem}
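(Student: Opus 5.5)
The plan is to split $[-1,1]$ into a small window $[-\delta,\delta]$ around the jump of $\theta$ and the two outer pieces $[-1,-\delta]$ and $[\delta,1]$. Since
\begin{align*}
(4V(f_n))^{1/2}=\int_{-1}^{1}|f_n'|^{1/2}\,\textrm{d}x
\end{align*}
is a sum of three integrals, it suffices to show that each piece is small for large $n$, up to a quantity that vanishes as $\delta\to 0$.

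\emph{Preliminary: pointwise convergence.} For monotone $f_n$, $L_1$ convergence to $\theta$ gives pointwise convergence away from $0$. Indeed, if $x_0\in(-1,0)$ and $f_n(x_0)>\eta$, monotonicity forces $f_n>\eta$ on $[x_0,0)$. Then $\|f_n-\theta\|_1\geq \eta|x_0|$, contradicting convergence for large $n$. The analogous bounds below, and on the right side, give $f_n(-\delta)\to 0$ and $f_n(\delta)\to 1$. Monotonicity also controls how negative $f_n$ can be near $-1$. If $f_n(x)<0$ then $f_n\le f_n(x)<0$ on $[-1,x]$, so
\begin{align*}
|f_n(x)|(x+1)\le \int_{-1}^{x}|f_n|\,\textrm{d}t\le e_n, \qquad e_n:=\|f_n-\theta\|_1 .
\end{align*}
Symmetrically, $(f_n(x)-1)(1-x)\le e_n$ whenever $f_n(x)>1$.

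\emph{Middle piece.} Apply the consequence of the reverse Hölder inequality just derived, $(\int_c^d|f'|^{1/2})^2\le (f(d)-f(c))(d-c)$, on $[-\delta,\delta]$. This gives $(\int_{-\delta}^{\delta}|f_n'|^{1/2})^2\le 2\delta\,(f_n(\delta)-f_n(-\delta))\to 2\delta$. So this piece contributes at most about $(2\delta)^{1/2}$ in the limit.

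\emph{Outer pieces (the main obstacle).} The plain inequality is not enough here: $f_n(-1)$ need not converge, since $f_n$ may dive steeply on a tiny layer at $-1$. Instead, I use a weighted Cauchy--Schwarz step:
\begin{align*}
\Bigl(\int_{-1}^{-\delta}f_n'^{1/2}\,\textrm{d}x\Bigr)^2\le \int_{-1}^{-\delta}f_n'(x)(x+1)^{1/2}\,\textrm{d}x\cdot\int_{-1}^{-\delta}(x+1)^{-1/2}\,\textrm{d}x .
\end{align*}
The second factor is at most $2\sqrt 2$. For the first, I write $f_n'(x)=\frac{\textrm{d}}{\textrm{d}x}(f_n(x)-f_n(-\delta))$ and integrate by parts. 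The boundary term at $-\delta$ vanishes. The boundary term at $-1$ is $(f_n(-\delta)-f_n(-1))\cdot 0$, which is nonpositive-free; its limit as $x\to -1$ is handled by the bound $|f_n(x)|(x+1)^{1/2}\le e_n(x+1)^{-1/2}$ and a cutoff. The result is
\begin{align*}
\int_{-1}^{-\delta}f_n'(x)(x+1)^{1/2}\,\textrm{d}x\le \tfrac12\int_{-1}^{-\delta}\bigl(f_n(-\delta)-f_n(x)\bigr)(x+1)^{-1/2}\,\textrm{d}x .
\end{align*}
The right side is at most $C|f_n(-\delta)|+\tfrac12\int|f_n|(x+1)^{-1/2}\,\textrm{d}x$. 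The weighted integral is controlled by interpolation, using $|f_n|=|f_n|^{1/2}(|f_n|(x+1))^{1/2}(x+1)^{-1/2}$ where $f_n<0$ (and $f_n$ is bounded by $f_n(-\delta)\to0$ elsewhere). This gives
\begin{align*}
\int|f_n|(x+1)^{-1/2}\,\textrm{d}x\le e_n^{1/2}\int|f_n|^{1/2}(x+1)^{-1}\ldots
\end{align*}
If this exact split fails because of the log singularity, I would instead use a weight $(x+1)^{s}$ with $s$ close to $1$ in the Cauchy--Schwarz step. The goal is a bound of the form $C(\delta)\,(|f_n(-\delta)|+e_n^{\kappa})$ with some $\kappa>0$, which tends to $0$. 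The right piece $[\delta,1]$ is treated symmetrically with weight $(1-x)$, using $f_n(\delta)\to 1$ and the bound $(f_n-1)(1-x)\le e_n$.

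\emph{Conclusion.} Combining the three pieces gives $\limsup_n (4V(f_n))^{1/2}\le (2\delta)^{1/2}$ for every $\delta>0$. Letting $\delta\to0$ yields $V(f_n)\to 0$. The delicate point is the endpoint estimate: getting an integrable weight whose moment against $|f_n|$ is controlled by $e_n$ via monotonicity.
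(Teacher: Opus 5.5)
Your diagnosis of the obstacle is right, and it is exactly the step the paper's own proof passes over. The paper applies the unweighted inequality on $[-1,c]$ and asserts $(f_n(c)-f_n(-1))(c+1)\to0$, which tacitly assumes $f_n(-1)\to0$. Your pointwise-convergence argument and the middle-piece bound are fine.

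The gap is the outer-piece estimate, which you leave open, and it cannot be closed. No bound of the form $C(\delta)(|f_n(-\delta)|+e_n^{\kappa})$ holds, and the theorem itself is false as stated. Take $\gamma_n=1/n$, $\tau_n=e^{-n^2}$ and
\begin{align*}
f_n(x)=\frac{1}{1+e^{-nx}}-\gamma_n^2\,(x+1+\tau_n)^{\gamma_n-1},\qquad x\in[-1,1].
\end{align*}
Both summands are smooth on $[-1,1]$ with nonnegative derivative, and the first has positive derivative. So $f_n$ is smooth and strictly increasing, and $\|f_n-\theta\|_1\le 2\ln 2/n+\gamma_n(2+\tau_n)^{\gamma_n}\le 5/n$. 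Bounding $|f_n'|^{1/2}$ from below by the square root of the derivative of the second summand gives
\begin{align*}
\int_{-1}^{-1/2}|f_n'|^{1/2}\,\textrm{d}x\ \ge\ 2\sqrt{1-\gamma_n}\,\bigl((\tfrac12+\tau_n)^{\gamma_n/2}-\tau_n^{\gamma_n/2}\bigr)\ \longrightarrow\ 2,
\end{align*}
because $\tau_n^{\gamma_n/2}=e^{-n/2}$. So the left outer piece alone stays near $2$ while $f_n(-\tfrac12)\to0$ and $e_n\to0$, and therefore $\liminf_n V(f_n)\ge 1$. The dive of $f_n$ towards $x=-1$ is spread over order $n$ dyadic scales. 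Each scale contributes order $n^{-2}$ to the $L_1$ error but order $n^{-1}$ to $\int|f_n'|^{1/2}\,\textrm{d}x$. This is precisely the logarithmic divergence you ran into.

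For the same reason, no choice of weight repairs your Cauchy--Schwarz step. After integrating by parts, the $L_1$ error controls $\int_{-1}^{-\delta}|f_n|\,W'\,\textrm{d}x$ only if $W(x)=O(x+1)$ near $-1$; test this with $f_n$ equal to $-e_n/\epsilon$ on $[-1,-1+\epsilon]$. But then $\int W^{-1}\,\textrm{d}x=\infty$. For $W=(x+1)^s$ with $s<1$, the moment $\int|f_n|(x+1)^{s-1}\,\textrm{d}x$ is simply not controlled by monotonicity and $\int|f_n|$.

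What is true is the theorem with an extra uniform bound, e.g.\ $0\le f_n\le 1$ (as for the sigmoids of Section~\ref{exmp}) or $\sup_n\|f_n\|_\infty\le M$. Then the unweighted inequality suffices:
\begin{itemize}
\item[(i)] on $[-1,-1+\eta]$ it gives at most $(2M\eta)^{1/2}$;
\item[(ii)] on $[-1+\eta,-\delta]$ it gives at most $(f_n(-\delta)-f_n(-1+\eta))^{1/2}\to0$, by your pointwise convergence;
\item[(iii)] the right end is symmetric.
\end{itemize}
Together with your middle piece this yields $\limsup_n\int_{-1}^{1}|f_n'|^{1/2}\,\textrm{d}x\le 2(2M\eta)^{1/2}+(2\delta)^{1/2}$ for all small $\eta,\delta>0$.
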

{\bf{Proof}}: Let $-1<c<0<d<1$. Write
\begin{align}
\label{triple}
  V(f_n)=(\int_{-1}^c |f'_n(x)|^{1/2}\, \text{d}x+\int_{c}^d |f'_n(x)|^{1/2}\, \text{d}x+\int_{d}^1 |f'_n(x)|^{1/2}\, \text{d}x )^2 \, .
\end{align}
It follows from (\ref{revH}) that 
\begin{align*}
    \int_{-1}^c |f'_n(x)|^{1/2}\, \text{d}x \leq ((f_n(c)-f_n(-1))(c+1))^{1/2}.
\end{align*}
Since $\lim_{n\rightarrow \infty} ||f_n(x)-\theta(x)||_1=0$, it follows that 
\begin{align*}
     \lim_{n\rightarrow \infty} ((f_n(c)-f_n(-1))(c+1))^{1/2}=0\, ,
\end{align*}
for all $-1<c<0$. Hence,
\begin{align*}
     \lim_{n\rightarrow \infty} \int_{-1}^c |f'_n(x)|^{1/2}\, \text{d}x=0\, ,
\end{align*}
for all $-1<c<0$. Similarly,
\begin{align*}
     \lim_{n\rightarrow \infty} \int_{d}^1 |f'_n(x)|^{1/2}\, \text{d}x=0\, ,
\end{align*}
for all $0<d<1$. Finally,
\begin{align}
\label{sndtrm}
    \int_{c}^d |f'_n(x)|^{1/2}\, \text{d}x \leq ((f_n(c)-f_n(-1))(d-c))^{1/2}\leq (d-c)^{1/2}\, ,
\end{align}
For all $-1<c<0<d<1$.\\
Hence, by choosing appropriate $c<0$ and $d>0$, all terms in (\ref{triple}) can be made arbitrarily small. $\qed$ 
\newpage
\printbibliography
\end{document}